\theoremstyle{plain}
\newtheorem{theorem}{Theorem}
\theoremstyle{definition}
\theoremstyle{remark}
\begin{document}

\title{Data-Locality-Aware Task Assignment and Scheduling for Distributed Job Executions}

\author{Hailiang~Zhao,
        Xueyan~Tang,~\IEEEmembership{Senior~Member,~IEEE,}
        Peng~Chen,
        Jianwei~Yin,
        and~Shuiguang~Deng,~\IEEEmembership{Senior~Member,~IEEE}
    \thanks{H. Zhao is with the School of Software Technology, 
    Zhejiang University, 315048 Ningbo, China. E-mail: hliangzhao@zju.edu.cn.}
    \thanks{X. Tang is with the College of Computing and Data Science, 
    Nanyang Technological University, 639798 Singapore. E-mail: asxytang@ntu.edu.sg.}
    \thanks{P. Chen, J. Yin and S. Deng are with the College of Computer Science and Technology, 
    Zhejiang University, 310027 Hangzhou, China. E-mails: naturechenpeng@gmail.com, \{zjuyjw, dengsg\}@zju.edu.cn.}
    \thanks{Xueyan Tang is the corresponding author.}
}



\maketitle

\begin{abstract}
    This paper addresses the data-locality-aware task assignment and scheduling problem for distributed job executions. Our goal is to minimize job completion times without prior knowledge of future job arrivals. We propose an Optimal Balanced Task Assignment algorithm (OBTA), which achieves minimal job completion times while significantly reducing computational overhead through efficient narrowing of the solution search space. To balance performance and efficiency, we extend the approximate Water-Filling (WF) algorithm, providing a rigorous proof that its approximation factor equals the number of task groups in a job. We also introduce a novel heuristic, Replica-Deletion (RD), which outperforms WF by leveraging global optimization techniques. To further enhance scheduling efficiency, we incorporate job ordering strategies based on a shortest-estimated-time-first policy, reducing average job completion times across workloads. Extensive trace-driven evaluations validate the effectiveness and scalability of the proposed algorithms.
\end{abstract}

\begin{IEEEkeywords}
    task assignment, job scheduling, distributed job execution, approximate analysis, trace-driven evaluation.
\end{IEEEkeywords}

\section{Introduction}\label{s1}
\IEEEPARstart{S}{cheduling} is a fundamental component of modern service-oriented computing, enabling distributed systems to efficiently execute diverse workloads and deliver scalable services \cite{reuther2018scalable,wang2020survey,10213224,9925644}. In cloud-based environments, jobs typically comprise multiple tasks that require access to specific data chunks, which are often replicated across geographically distributed servers. Efficient data-locality-aware scheduling, which assigns tasks to servers hosting the required data chunks, is critical for reducing data transfer overhead, enhancing task execution efficiency, and ensuring compliance with stringent service-level agreements (SLAs) \cite{guo2012investigation}. The design of online scheduling algorithms for such multi-task jobs introduces two central challenges: \textit{task assignment} and \textit{job ordering}. Task assignment entails mapping tasks to appropriate servers based on data locality, often modeled as a semi-matching problem on a bipartite graph with tasks and servers as nodes. Job ordering involves determining the execution sequence of pending tasks to optimize resource utilization and minimize job completion times, a key requirement in dynamic and resource-intensive service environments.

Existing research has made progress on these challenges, yet often under assumptions that limit applicability in real-world service systems \cite{hung2015scheduling,10.1145/3337821.3337843,beaumont2020performance,9826037,10213224,9925644}. For instance, Hung \textit{et al.} \cite{hung2015scheduling} proposed a workload-aware greedy scheduling strategy (SWAG) that prioritizes jobs based on estimated completion times. While effective in specific scenarios, this approach neglects data replication, a prevalent feature in distributed service systems. Similarly, Guan \textit{et al.} \cite{10.1145/3337821.3337843} introduced an offline fairness-aware task assignment algorithm to ensure max-min fairness across servers, but their method lacks the adaptability needed for online and dynamic service platforms. Beaumont \textit{et al.} \cite{beaumont2020performance} explored scheduling with data replication but constrained their analysis to single-job scenarios and homogeneous environments, which oversimplifies the heterogeneity and complexity inherent in service-oriented infrastructures. Guan \textit{et al.} \cite{9826037} tackled an online scheduling problem for distributed job executions, leveraging maximum flow-based techniques for task assignment. However, their work offered limited performance analysis and did not thoroughly examine job ordering strategies essential for dynamic service environments.

In this paper, we focus on developing data-locality-aware task assignment and scheduling algorithms with theoretical guarantees tailored for online distributed service environments. Our approach addresses two fundamental scenarios: (i) jobs are executed in a FIFO (first-in-first-out) manner, and (ii) jobs can be prioritized and reordered to optimize performance. By addressing these scenarios, we aim to improve the efficiency of task scheduling in service-oriented systems, tackling the dual challenges of data locality and dynamic workload management. The main contributions of this work are summarized as follows:
\begin{itemize}
    \item We formulate the task assignment problem as a bipartite graph-based non-linear program and propose an Optimal Balanced Task Assignment algorithm (OBTA), which significantly reduces the search space and provides efficient solutions for heterogeneous service environments.
    
    \item We extend the approximate water-filling (WF) algorithm, originally proposed for homogeneous settings in \cite{9826037}, to heterogeneous settings. We provide a nontrivial proof that WF achieves a $K$-approximation, where $K$ represents the number of task groups in a job.

    \item We propose a novel heuristic, replica-deletion (RD), for task assignment. RD generally produces better task assignments than WF.
    
    \item We extend WF to support job reordering and introduce an accelerated order-conscious WF (OCWF-ACC) algorithm, which incorporates an early-exit technique to improve computational efficiency.
    
    \item We validate the proposed algorithms through extensive trace-driven evaluations. OBTA demonstrates significant reductions in average job completion times, while WF provides a low-overhead yet effective approximation. RD enhances overall performance further. Moreover, OCWF-ACC achieves substantial reductions in computation overhead compared to job-reordered task assignment algorithms without early-exit techniques.
\end{itemize}

The remainder of this paper is organized as follows: Sec. \ref{s2} defines the problem formulation. Sec. \ref{s3} presents task assignment algorithms designed for FIFO scheduling. Sec. \ref{s4} extends these algorithms to support job reordering integration. Sec. \ref{s5} provides experimental results based on real-world job traces, demonstrating the effectiveness of the proposed methods. Sec. \ref{s6} reviews related work, highlighting the context and contributions of this study. Finally, Sec. \ref{s7} concludes the paper and discusses potential future research directions.

\section{Problem Formulation}\label{s2}

We consider a distributed computing system comprising $M$ servers, indexed by $m$ for $m \in \mathcal{M} := \{1, ..., M \}$. The system stores a collection of data items, such as key-value pairs, objects, or files, partitioned into equally-sized data chunks. Each server holds a subset of these data chunks, and each chunk can be replicated across multiple servers. For each server $m \in \mathcal{M}$, let $\mathcal{D}_m$ denote the set of data chunks stored on it, where $\mathcal{D}_m \cap \mathcal{D}_{m’}$ can be non-empty for $m \neq m’$. \textcolor{black}{Key notations and acronyms used in this paper are summarized in Table \ref{tab_notation} and Table \ref{tab_acronym}, respectively.}

\begin{table}[ht]
    \color{black}
    \centering
    \caption{Summary of key notations.}
    \label{tab_notation}
    \begin{tabular}{cl}
    \toprule
    \textbf{Notation} & \textbf{Description} \\
    \midrule
    $\mathcal{M}$ & The set of servers \\
    $m$ & Index of servers in $\mathcal{M}$ \\
    $\mathcal{D}_m$ & The set of data chunks in server $m \in \mathcal{M}$ \\
    $c$ & Index of arriving jobs \\
    $\mathcal{T}_c$ & The set of tasks in job $c$ \\
    $r$ & Index of tasks \\
    $\mathcal{S}^r$ & The available servers (due to data locality) of task $r$ \\
    $\mu_m^c$ & Processing capacity of server $m$ for job $c$ \\
    $\Phi_c$ & The estimated completion time of job $c$ \\
    $\Phi_c^+ (\Phi_c^-)$ & The upper (lower) bound of $\Phi_c$ derived by OBTA \\
    $b_m^c$ & The busy time of server $m$ at the arrival of job $c$ \\
    $K_c$ & The number of task groups in job $c$ \\
    $k$ & Index of task groups \\
    $\mathcal{T}_c^k$ & The set of tasks in the $k$-th group of job $c$ \\
    $\mathcal{S}_c^k$ & The set of available servers of the task group $\mathcal{T}_c^k$ \\
    $\textsc{ALG}$ & An algorithm \textsc{ALG} (and its objective value) \\
    $\alpha_{\textsc{ALG}}$ & The approximation ratio of the algorithm \textsc{ALG} \\
    $I$ & A job arrival instance \\
    $\mathcal{O}_c$ & The set of outstanding jobs at the arrival of job $c$ \\
    \bottomrule
    \end{tabular}
\end{table}

\begin{table}[ht]
    \color{black}
    \centering
    \caption{Summary of acronyms of related algorithms. OBTA-N and OBTA-P are introduced in Sec. \ref{s6}.}
    \label{tab_acronym}
    \begin{tabular}{cl}
    \toprule
    \textbf{Acronym} & \textbf{Description} \\
    \midrule
    SWAG & workload-aware greedy scheduling strategy \\
    OBTA & optimal balanced task assignment \\
    NLIP & non-linear integer programming \\
    WF & water-filling \\
    RD & replica deletion \\
    OCWF & order-conscious water-filling \\
    OCWF-ACC & accelerated order-conscious water-filling \\
    LIP & linearized integer programming \\
    OBTA-N & OBTA with only solution space narrowing \\
    OBTA-P & OBTA with only piecewise linearization \\
    \bottomrule
    \end{tabular}
\end{table}

Jobs arrive dynamically, one at a time, for execution. Each job consists of multiple independent tasks, where each task processes a specific data partition. A task requires exactly one data chunk as input, and all tasks within a job are assumed to have identical computational demands. Let $c \in \mathbb{N}^+$ denote the $c$-th job (also referred to as job $c$) in chronological order. Further, let $\mathcal{T}_c$ represent the set of tasks in job $c$, and let $d_r$ denote the data chunk required by a task $r \in \mathcal{T}_c$. For each task $r \in \mathcal{T}_c$, the set of available servers capable of processing the task is defined as:
\begin{equation}
    \mathcal{S}^r := \big\{ m \in \mathcal{M} \mid d_r \in \mathcal{D}_m \big\}.
\end{equation}
To maintain data locality, each task must be assigned to one of its available servers for execution. Each server $m$ maintains a queue $q_m$ to store outstanding tasks, with no predefined limit on queue size. For analysis purposes, the system’s time is divided into discrete, identical time slots.

Each server $m$ has a task processing capacity characterized by $\mu_m^c$, representing the number of tasks from job $c$ that server $m$ can process within a single time slot. The initial busy time of each server $m$ just before job $c$’s arrival, denoted by $b_m^c$, is the number of time slots required to process all tasks currently queued at server $m$. This busy time can be estimated using:
\begin{equation}
    b_m^c = \sum_{h=1}^{c-1} \left\lceil \frac{o_m^h}{\mu_m^h} \right\rceil,
    \label{bklg}
\end{equation}
where $o_m^h \leq |\mathcal{T}_h|$ is the number of tasks from job $h$ that remain pending in $q_m$ when job $c$ arrives.

The objective is to design a scheduling strategy that assigns tasks to servers and determines execution priorities to minimize the completion time of each job. The completion time of a job is defined as the duration between its arrival and the completion of its last processed task, equivalent to minimizing the finish time of the last task in the job. We investigate the problem under two distinct scenarios:
\begin{itemize}
    \item \textbf{FIFO queues.} Outstanding tasks are processed in the order of their arrival (first-in-first-out). The challenge is to assign tasks of each arriving job to available servers while balancing server workloads to minimize job completion times.
    \item \textbf{Prioritized reordering.} Outstanding tasks can be dynamically reordered in each server’s queue. This scenario focuses on prioritizing tasks to further reduce job completion times, introducing additional complexity as reordering decisions influence overall system performance.
\end{itemize}

We do not impose any specific assumptions about the timing or pattern of job arrivals.

\section{Algorithms for FIFO Queues}\label{s3}

\subsection{Matching-based Non-Linear Programming}\label{s31}

\subsubsection{Programming in a Bipartite Graph}\label{s311}
When a new job $c$ arrives, the task assignment problem involves allocating all tasks in $\mathcal{T}_c$ to appropriate servers while ensuring data locality and minimizing the job completion time.

To facilitate this, we divide the tasks into task groups, where each task group contains tasks sharing the same set of available servers. Let $K_c$ denote the number of task groups for job $c$. Further, for each $ k \in \mathcal{K}_c := \{ 1, \ldots, K_c \} $, we denote by $ \mathcal{T}_c^k $ and $ \mathcal{S}_c^k $ the set of tasks in the $ k $-th group (also referred to as group $ k $) and the set of available servers for tasks in the $ k $-th group, respectively. By definition, we have
\begin{equation}
    \mathcal{T}_c^k = \left\{ r \in \mathcal{T}_c \mid \mathcal{S}^r = \mathcal{S}_c^k \right\}.
\end{equation}

The task assignment problem can be modeled as a matching problem on a bipartite graph. In this graph, the left-side nodes are the $ K_c $ task groups, while the right-side nodes are the $ M $ servers. For each $ k \in \mathcal{K}_c $, an edge $(k, m)$ exists if $ m \in \mathcal{S}_c^k $. We can formulate the assignment problem as a non-linear integer program as follows:
\begin{align}
    &\mathcal{P}: \qquad \qquad \qquad \min_{\Phi_c, n_m^k \in \mathbb{N}^+} \Phi_c \nonumber \\
    \text{s.t.} &\left\{ 
        \begin{array}{ll}
            \sum_{k \in \mathcal{K}_c} n_m^k \leq \max \left\{ \Phi_c - b_m^c, 0 \right\} & \forall m \in \bigcup_{k \in \mathcal{K}_c} \mathcal{S}_c^k, \\
            \sum_{m \in \mathcal{S}_c^k} n_m^k \cdot \mu_m^c \geq |\mathcal{T}_c^k| & \forall k \in \mathcal{K}_c.
        \end{array}
    \right. \label{cons}
\end{align}
Recall that $ b_m^c $ is the (initial) estimated busy time of server $ m $ at the time when job $c$ arrives, calculated by \eqref{bklg}. In $ \mathcal{P} $, $ n_m^k $ is the number of time slots required to process the tasks in group $ \mathcal{T}_c^k $ at server $ m $, given the profiled processing capacity $ \mu_m^c $. Each $ n_m^k $ is associated with an edge $(k,m)$ in the bipartite graph. $ \Phi_c $ is the estimated completion time of job $ c $. \textcolor{black}{The first constraint set in \eqref{cons} means that any available server whose initial estimated busy time is larger than $ \Phi_c $ should not participate in the assignment of tasks in job $c$.} This constraint ensures balance among the participating servers. \textcolor{black}{The second constraint set in \eqref{cons} ensures that all the tasks in job $c$ can be processed.} The program $\mathcal{P}$ can be solved optimally using commercial solvers such as CPLEX.\footnote{\url{https://www.ibm.com/products/ilog-cplex-optimization-studio}}

\subsubsection{Narrowing the Search of $\Phi_c$}\label{s312}
Commercial solvers can be computationally expensive when solving non-linear integer programs, especially in large-scale scenarios. To mitigate this overhead, we narrow the search space of $\Phi_c$ by leveraging the inherent characteristics of the problem.

After the arrival of job $c$, the total number of time slots required to process both the backlogged tasks and the newly assigned tasks from job $c$ is bounded above by:
\begin{equation}
    \Phi_c^{+} := \max_{m \in \bigcup_{k \in \mathcal{K}_c} \mathcal{S}_c^k} \left\{ \left\lceil \frac{\sum_{k \in \mathcal{K}_c: m \in \mathcal{S}_c^k} |\mathcal{T}_c^k|}{\mu_m^c} \right\rceil + b_m^c \right\}.
\end{equation}
This upper bound, $\Phi_c^{+}$, is determined under the assumption that all tasks in $\mathcal{T}_c$ are assigned to a single available server $m$. The corresponding lower bound of $\Phi_c$ is defined as:
\begin{equation}
    \Phi_c^{-} := \max_{k \in \mathcal{K}_c} x_k,
    \label{xk}
\end{equation}
where $x_k$ is the minimum integer satisfying:
\begin{equation}
    \sum_{m \in \mathcal{S}_c^k} \left( \max \left\{ x_k - b_m^c, 0 \right\} \right) \cdot \mu_m^c \geq |\mathcal{T}_c^k|.
    \label{xk2}
\end{equation}
Here, $x_k$ represents the minimal number of time slots required to process all tasks in $\mathcal{T}_c^k$, assuming that task group $k$ is the only group being processed for job $c$. By combining these bounds, we reduce the search space of $\Phi_c$ from $\mathbb{N}^+$ to the integer interval $\left[ \Phi_c^-, \Phi_c^+ \right]$. This narrowing significantly reduces the computational complexity of solving the task assignment problem, making it more tractable for large-scale systems.

\subsubsection{The Matching-based Algorithm}\label{s313}

\begin{figure}[ht]
  \centering
  \includegraphics[width=3in]{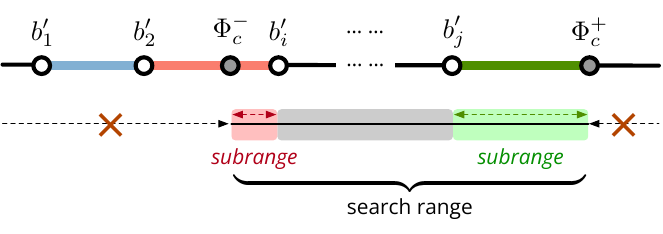}
  \caption{Divide $\big[\Phi_c^-, \Phi_c^+ \big]$ into disjoint subranges.}
  \label{dividing}
\end{figure}

By introducing $\Phi_c^{-}$ and $\Phi_c^{+}$, the search space of $\Phi_c$ is significantly narrowed. However, the problem $\mathcal{P}$ remains non-linear due to the piecewise nature of the first constraint in \eqref{cons}. To address this, we divide the interval $\left[ \Phi_c^-, \Phi_c^+ \right]$ into multiple disjoint sub-intervals based on the initial estimated busy times of the available servers. This division is illustrated in Fig. \ref{dividing}. Specifically, the initial estimated busy times $b_m^c$ of all servers $m \in \bigcup_{k \in \mathcal{K}_c} \mathcal{S}_c^k$ are sorted in ascending order, and the sorted values are denoted as $b_1’, b_2’, \ldots$. We then identify indices $i$ and $j$, where $i$ is the smallest index such that $b_i’ \geq \Phi_c^-$, and $j$ is the largest index such that $b_j’ \leq \Phi_c^+$. With this setup, the search space of $\Phi_c$ is divided into sub-intervals: $\left[ \Phi_c^-, b_i’ \right)$, $\left[ b_i’, b_{i+1}’ \right)$, ..., $\left[ b_j’, \Phi_c^+ \right]$. Within each sub-interval, the piecewise nature of the first constraint in \eqref{cons} is resolved, transforming $\mathcal{P}$ into a linear integer program. These sub-problems are then solved sequentially. If $\mathcal{P}$ is solvable within a particular sub-interval, the corresponding solution is guaranteed to be optimal, as the remaining sub-intervals cannot yield a smaller $\Phi_c$.

Using this approach, we propose the Optimal Balanced Task Assignment (OBTA) algorithm. When the profiled values $\{ \mu_m^c \}_{m,c}$ are accurate, OBTA produces optimal solutions by computing the exact values of $n_m^k$ for task assignment. The pseudocode for OBTA is provided in Algorithm \ref{algo-OBTA}.

\begin{algorithm}[t]
  \caption{OBTA}
  \label{algo-OBTA}
  {\color{black}\KwIn{Online arriving jobs $1, 2, ..., c, ...$, server capacities $\{ \mu_m^c \}_{m,c}$}}
  {\color{black}\KwOut{Assignment solution for each task of each arriving job}}
  \While{a new job $c$ arrives}
  {
    \ForEach{$m \in \bigcup_{k \in \mathcal{K}_c} \mathcal{S}_c^k$} 
    {
      Estimate $b_m^c$ using \eqref{bklg}
    }
    Solve $\mathcal{P}$ as described in Sec. \ref{s313}, obtaining $\{ n_m^k \}_{m,k}$\\
    \ForEach{$k \in \mathcal{K}_c$}
    {
      \ForEach{$m \in \mathcal{S}_c^k$}
      {
        \uIf{server $m$ is not the last server in $\mathcal{S}_c^k$}
        {
          Assign $n_m^k \cdot \mu_m^c$ group-$k$ tasks to server $m$
        }
        \Else{
          Assign all remaining group-$k$ tasks to server $m$
        }
      }
    }
  }
\end{algorithm}

\subsection{Water-Filling}\label{s32}
Solving a group of integer programs can still be time-consuming for large-scale services environments. To improve efficiency, we propose an alternative task assignment algorithm called Water-Filling (WF), which is extended from \cite{9826037}. WF assigns tasks to jobs sequentially, one group at a time, using an approximate approach to balance computational efficiency with performance.

\subsubsection{Algorithm Design}\label{sec-wf-design}
WF dynamically estimates server workloads as tasks are assigned. Let $b_m^c(k)$ denote the estimated busy time of server $m$ after assigning tasks from group $k$. Initially, for each server $m \in \mathcal{M}$, the busy time is:
\begin{equation}
    b_m^c(0) := b_m^c,
\end{equation}
and it increases monotonically as task groups are assigned:
\begin{equation}
    b_m^c(0) \leq b_m^c(1) \leq \cdots \leq b_m^c \left(K_c \right).
\end{equation}
For each task group $k$, WF determines the minimum time $\xi_k$ required to process all tasks in the group across its available servers. Specifically, $\xi_k$ is the smallest integer satisfying:
\begin{equation}
    \sum_{m \in \mathcal{S}_c^k} \left( \max \left\{ \xi_k - b_m^c(k-1), 0 \right\} \right) \cdot \mu_m^c \geq |\mathcal{T}_c^k|,
    \label{xik}
\end{equation}
where $\mathcal{S}_c^k$ is the set of servers available for group $k$, and $b_m^c(k-1)$ represents the server’s busy time after the previous task group assignment. The calculation of $\xi_k$ is analogous to $x_k$ in \eqref{xk2}, representing the estimated time needed to process all tasks in the group. A server $m \in \mathcal{S}_c^k$ is considered a participating server if $\xi_k - b_m^c(k-1) > 0$. We then allocate $ (\xi_k - b_m^c(k-1)) \cdot \mu_m^c $ tasks in group $k$ to each participating server $ m $ (or all the remaining tasks in group $ k $ if server $ m $ is the last participating server). After the assignment, the busy time of each server is updated as:
\begin{equation}
    b_m^c(k) = \max \left\{ b_m^c(k-1), \xi_k \right\}.
    \label{bk_update}
\end{equation}
This process, repeated for each task group, simulates a water-filling mechanism where the participating servers’ workloads are gradually equalized. Fig. \ref{greedy} illustrates the water-filling concept for task assignment.

\begin{figure}[t]
  \centering
  \includegraphics[width=2.25in]{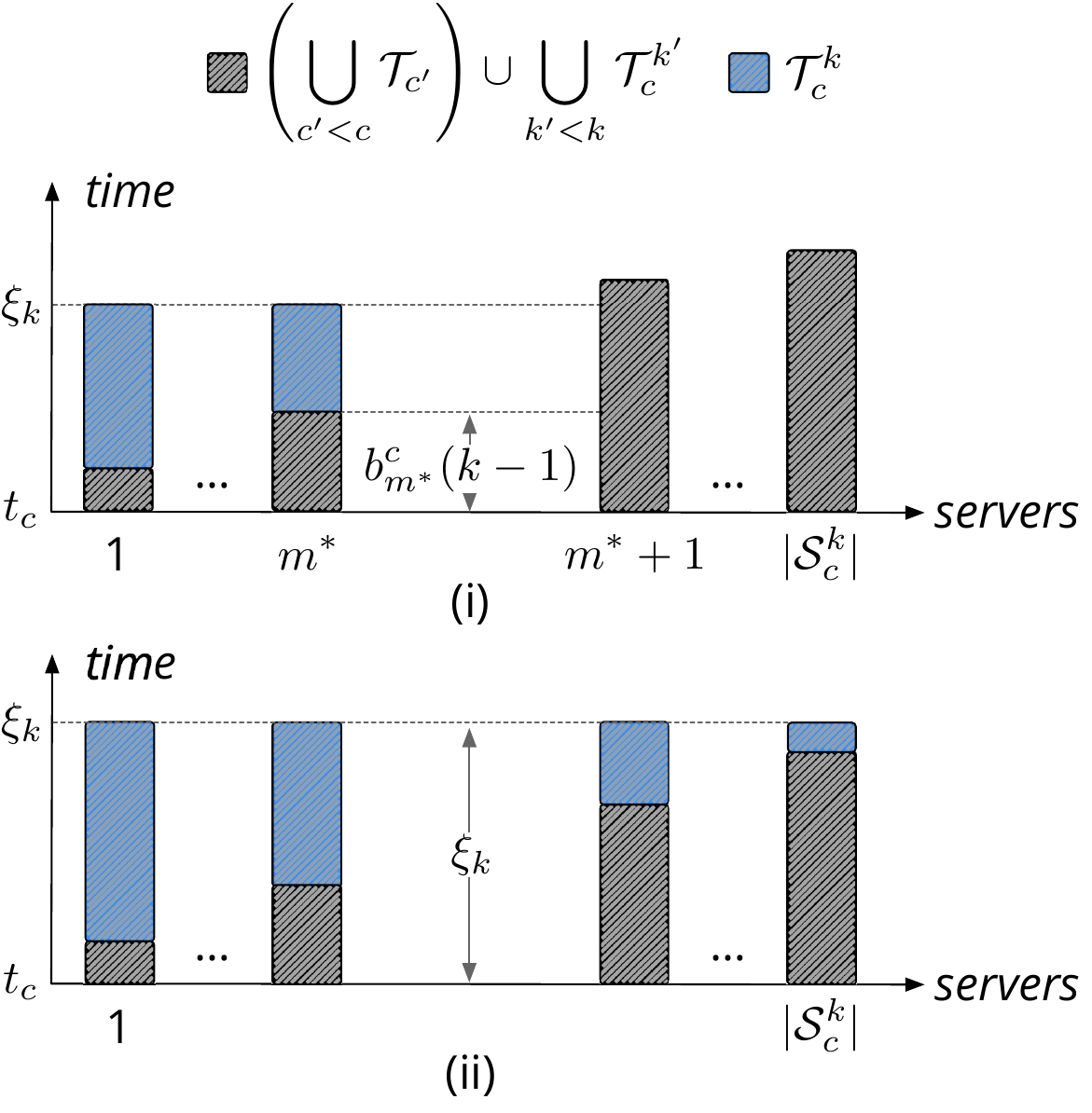}
  \caption{Assign tasks in $\mathcal{T}_c^k$ to their available servers in a water-filling manner. In (i), only a subset of $\mathcal{S}_c^k$, including servers $1, ..., m^*$, participate in the assignment of $\mathcal{T}_c^k$; In (ii), every server in $\mathcal{S}_c^k$ participates in the assignment of $\mathcal{T}_c^k$.}
  \label{greedy}
\end{figure}

The pseudocode of WF is presented in Algorithm \ref{algo-wf}. Its computational complexity is $\mathcal{O} \left( \sum_{k \in \mathcal{K}_c} |\mathcal{S}_c^k| \log (|\mathcal{S}_c^k|) \right)$, as $\xi_k$ can be calculated via binary search with $\mathcal{O}(\log(|\mathcal{S}_c^k|))$ iterations. This complexity can also be expressed as $\mathcal{O}(K_c \cdot M \cdot \log M)$.

\begin{algorithm}[t]
  \caption{WF}
  \label{algo-wf}
  {\color{black}\KwIn{Online arriving jobs $1, 2, ..., c, ...$, server capacities $\{ \mu_m^c \}_{m,c}$}}
  {\color{black}\KwOut{Assignment solution for each task of each arriving job}}
  \While{a new job $c$ arrives}
  {
    \For{each $m \in \bigcup_{k \in \mathcal{K}_c} \mathcal{S}_c^k$} 
    {
      Estimate $b_m^c$ by \eqref{bklg}\\
      $b_m^c(0) \leftarrow b_m^c$
    }
    \For{each $k \in \mathcal{K}_c$}
    {
      Calculate $\xi_k$ as detailed in Sec. \ref{sec-wf-design}\\
      \For{each $m \in \mathcal{S}_c^k$}
      {
        \If{$b_m^c(k-1) \geq \xi_k$}
        {
          \textbf{continue}
        }
        \uIf{server $m$ is not the last server in $\mathcal{S}_c^k$}
        {
          Assign $\left(\xi_k - b_m^c(k-1)\right) \cdot \mu_m^c$ group-$k$ tasks to server $m$
        }
        \Else{
          Assign all the remaining group-$k$ tasks to server $m$
        }
        Update $b_m^c(k)$ by \eqref{bk_update}
      }
    }
  }
\end{algorithm}

\subsubsection{Approximation Analysis}
In this section, we analyze the performance of WF. The approximation factor of WF, compared to the optimal algorithm, is defined as:
\begin{equation}
    \alpha_{\textsc{WF}} := \max_{I} \frac{\text{WF}(I)}{\text{OPT}(I)},
\end{equation}
where $ I := I \big(c, \{ b_m^c \}_m \big) $ represents the arrival instance of a new job $c$ and the initial estimated busy times of all servers prior to job $c$’s arrival. Here, $\text{OPT}(I)$ and $\text{WF}(I)$ denote the maximum estimated busy times of participating servers after assigning job $c$’s tasks using the optimal algorithm and WF, respectively.

\begin{figure}[h]
  \centering
  \includegraphics[width=2.9in]{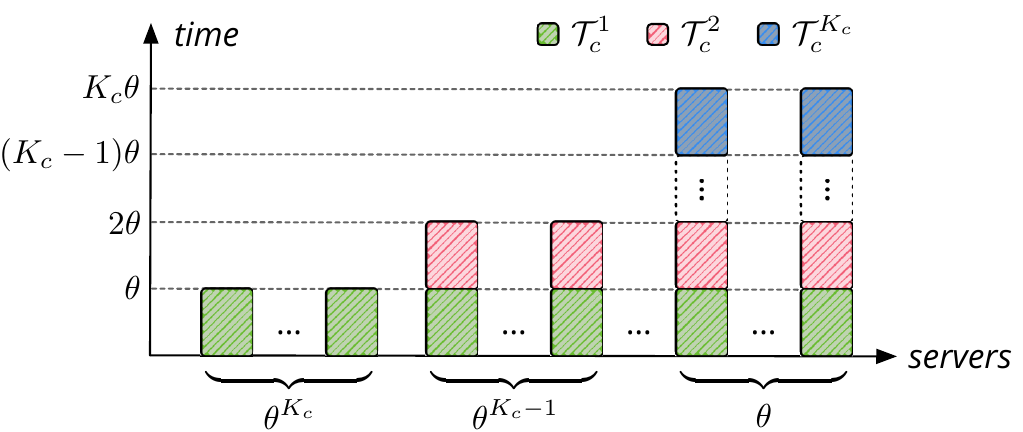}
  \caption{Visualization of the task assignment by WF for a constructed instance.}
  
  \label{wf_lower_bound_I}
\end{figure}

\begin{theorem}
    For each newly arrived job $c$, the approximation factor of WF is at least $K_c$.
    \label{theorem-wf-lb}
\end{theorem}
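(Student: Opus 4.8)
The plan is to prove the lower bound constructively: for an arbitrary $K\ge 1$ I will exhibit a concrete arrival instance $I=I\big(c,\{b_m^c\}_m\big)$ whose job $c$ has exactly $K$ task groups and on which $\text{WF}(I)=K\cdot\text{OPT}(I)$ (or, with a free scaling parameter, $\text{WF}(I)/\text{OPT}(I)\to K$); specializing $K=K_c$ then gives the statement. The design principle exploits that WF processes the groups in the fixed order $1,2,\dots,K$ and commits to group $k$ before seeing groups $k+1,\dots,K$. I will choose the data placement (hence the sets $\mathcal{S}_c^1,\dots,\mathcal{S}_c^K$), the profiled capacities $\mu_m^c$, the initial busy times $b_m^c$, and the group sizes $|\mathcal{T}_c^k|$ so that WF is repeatedly forced to raise the estimated busy time of one distinguished server $h$ (which lies in every group) by a full ``$\text{OPT}$-worth'' at each group $k$, whereas the optimal balanced assignment, planning globally, keeps $h$ lightly loaded and finishes every participating server at the same value $\Phi^\star:=\text{OPT}(I)$.

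Concretely, I would first fix the intended optimum $\Phi^\star$ and engineer the instance so that OPT attains it. Then I would analyze WF's run by induction on $k$, maintaining the invariant that after WF finishes group $k$ the server $h$ sits at estimated busy time $k\Phi^\star$: whenever WF reaches group $k$, every server of $\mathcal{S}_c^k$ other than $h$ should already look ``as loaded as $h$'' to WF --- either because WF's own placements in groups $1,\dots,k-1$ pushed it there, or because its initial $b_m^c$ was chosen that way --- so that the water level $\xi_k$ of group $k$ computed via \eqref{xik} equals $k\Phi^\star$ and $h$ is dragged up to it by \eqref{bk_update}. The sizes $|\mathcal{T}_c^k|$ are tuned precisely for this balance: slightly smaller and $\xi_k$ falls short of $k\Phi^\star$ (WF fails to climb a full level), slightly larger and OPT can no longer hold group $k$'s tasks below $\Phi^\star$. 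After group $K$, $h$ is at $K\Phi^\star$ and is still a participating server, so $\text{WF}(I)\ge K\Phi^\star$.

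Next I would certify $\text{OPT}(I)=\Phi^\star$. For the upper bound I exhibit a feasible point of $\mathcal{P}$ with $\Phi_c=\Phi^\star$: nonnegative integers $n_m^k$ satisfying $\sum_{k\in\mathcal{K}_c} n_m^k\le\Phi^\star-b_m^c$ on every server $m$ that participates (i.e.\ with $b_m^c\le\Phi^\star$) and $\sum_{m\in\mathcal{S}_c^k} n_m^k\mu_m^c\ge|\mathcal{T}_c^k|$ for each $k$ --- this is where global flexibility pays off, placing each group's tasks on servers that WF has (or will) overload but that in OPT's schedule remain well below $\Phi^\star$; this gives $\text{OPT}(I)\le\Phi^\star$. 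A pigeonhole/counting argument on one bottleneck group then gives the matching lower bound $\text{OPT}(I)\ge\Phi^\star$. Combining the two analyses, $\text{WF}(I)/\text{OPT}(I)=K$, hence $\alpha_{\textsc{WF}}\ge K$, and taking $K=K_c$ proves Theorem~\ref{theorem-wf-lb}.

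The step I expect to be the main obstacle is the construction itself, because of an inherent tension: for WF to climb a full level at group $k$, the non-$h$ servers of $\mathcal{S}_c^k$ must appear fully used to WF, yet those same servers must stay genuinely usable by OPT --- otherwise OPT is driven into the same pile-up and the ratio collapses to a constant. (A plain nested family of server sets only yields a gap of $2$, and spreading each group over all still-available servers only yields $\Theta(\log K_c)$.) Resolving this requires a carefully layered design of the $\mathcal{S}_c^k$ together with a schedule of initial busy times $b_m^c$ that are ``prohibitively large relative to WF's running state but still below $\Phi^\star$'' on exactly the servers and groups where it matters; pinning down these parameters, and checking the few boundary cases introduced by the integrality of $\xi_k$ and of $\Phi_c$, is the real work, after which the WF trace and the feasibility check for $\mathcal{P}$ reduce to routine bookkeeping.
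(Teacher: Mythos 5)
There is a genuine gap: your write-up is a design plan rather than a proof. No concrete instance is ever exhibited --- the choice of the sets $\mathcal{S}_c^k$, the group sizes $|\mathcal{T}_c^k|$, and the initial busy times is deferred (``pinning down these parameters \dots\ is the real work''), so the central object of a constructive lower bound is missing. Worse, the obstacle analysis that leads you to postpone the construction is mistaken: you assert that ``a plain nested family of server sets only yields a gap of $2$,'' but a nested family is exactly what works, provided the sets shrink \emph{geometrically}. Take $\mu_m^c\equiv 1$, $b_m^c\equiv 0$, $\mathcal{S}_c^1\supset\mathcal{S}_c^2\supset\cdots\supset\mathcal{S}_c^{K_c}$ with $|\mathcal{S}_c^k|=\sum_{k'=1}^{K_c-k+1}\theta^{k'}$ for an integer $\theta\ge 2$, and $|\mathcal{T}_c^k|=\theta\cdot|\mathcal{S}_c^k|$. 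Then WF raises the water level of every server in $\mathcal{S}_c^k$ by exactly $\theta$ at step $k$, so the innermost servers end at $K_c\theta$; whereas a global assignment routes group $k$ ($k<K_c$) onto the $\theta^{K_c-k+1}$ servers of $\mathcal{S}_c^k\setminus\mathcal{S}_c^{k+1}$ alone, which is the dominant fraction of $\mathcal{S}_c^k$, finishing every group by time $\theta+2$. The ratio $K_c\theta/(\theta+2)$ tends to $K_c$ as $\theta\to\infty$. The point you miss is quantitative, not structural: what matters is that the exclusive part $\mathcal{S}_c^k\setminus\mathcal{S}_c^{k+1}$ constitutes almost all of $\mathcal{S}_c^k$, which geometric sizes deliver and (for example) halving sizes do not deliver as cleanly.

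Your intended mechanism --- a distinguished server $h$ dragged up by roughly one $\mathrm{OPT}$-worth per group while the optimum keeps it lightly loaded --- is in fact realized by this nested instance (take any $h\in\mathcal{S}_c^{K_c}$), so your intuition is pointed in the right direction. But the specific device you propose for creating the pile-up, namely initial busy times that are ``prohibitively large relative to WF's running state but still below $\Phi^\star$,'' cannot work as described: WF recruits every server of $\mathcal{S}_c^k$ whose current busy time is below the water level $\xi_k$, so a server with $b_m^c<\Phi^\star\le\xi_k$ can never be made to ``look too loaded'' to WF while remaining usable by the optimum. The pile-up must instead come from WF's own earlier placements on the shared (inner) servers, which is what the nested construction arranges. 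As submitted, the proposal neither proves the bound nor correctly identifies why the natural construction succeeds.
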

\begin{proof}
    The WF-to-optimal ratio is 1 under favorable conditions, such as when $\mathcal{T}_c$ consists of only one task group or when $\mathcal{S}_c^k \cap \mathcal{S}_c^{k’} = \emptyset$ for all $k \neq k’$. However, to demonstrate the worst-case ratio, we construct a specific instance $I$, illustrated in Fig. \ref{wf_lower_bound_I}, where the available servers of different task groups overlap.

    For simplicity, let $\mu_m^c = 1$ and $b_m^c = 0$ for all servers $m$. For each task group $k$, the number of its available servers is defined as:
    \begin{equation}
        |\mathcal{S}_c^k| = \sum_{k'=1}^{K_c - k + 1} x^{k'},
    \end{equation}
    where $x \geq 2$ is an integer. Furthermore, the available servers follow a hierarchical structure:
    \begin{equation}
        \mathcal{S}_c^1 \supset \mathcal{S}_c^2 \supset \cdots \supset \mathcal{S}_c^{K_c},
    \end{equation}
    meaning the servers available for a higher-indexed group are a subset of those available for a lower-indexed group. Each task group $k$ contains
    $$
    |\mathcal{T}_c^k| = x \cdot |\mathcal{S}_c^k|
    $$
    tasks. In WF, tasks in each group are assigned sequentially, and as shown in Fig. \ref{wf_lower_bound_I}, $x$ time slots are required to process all tasks in each group.
    
    \begin{figure}[h]
      \centering
      \includegraphics[width=2.6in]{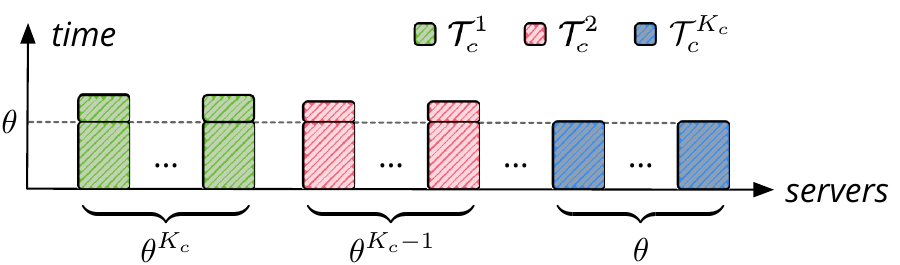}
      \caption{Visualization of the task assignment by OPT for the constructed instance.}
      \label{wf_lower_bound_I_2}
    \end{figure}
    
    In the optimal assignment (OPT), illustrated in Fig. \ref{wf_lower_bound_I_2}, tasks from group $k$ ($k < K_c$) are assigned to a subset of its servers $\mathcal{S}_c^k \setminus \mathcal{S}_c^{k+1}$. The number of time slots required to process all tasks in group $k$ is:
    \begin{equation}
        \Bigg\lceil \frac{x \cdot \sum_{i=1}^{K_c - k + 1} x^{i}}{x^{K_c - k + 1}} \Bigg\rceil = x + 2.
    \end{equation}
    For group $K_c$, the number of time slots required remains $x$. Therefore, the total completion time for job $c$ using OPT is:
    \begin{equation}
        \textsc{OPT}(I) = \max\{ x + 2, x \} = x + 2.
    \end{equation}

    The approximation factor for this instance is:
    \begin{equation}
        \frac{\textsc{WF}(I)}{\textsc{OPT}(I)} = \frac{K_c \cdot x}{x + 2},
    \end{equation}
    As $x \to \infty$, the ratio approaches $K_c$, demonstrating that the worst-case approximation factor of WF is at least $K_c$.
\end{proof}

Theorem \ref{theorem-wf-lb} establishes that $K_c$, the number of task groups in job $c$, is a lower bound for the approximation factor $\alpha_{\textsc{WF}}$ of WF. We now demonstrate that $K_c$ is also an upper bound for $\alpha_{\textsc{WF}}$, thereby tightly characterizing the approximation factor of WF.

\begin{theorem}
    For each newly arrived job $c$, the approximation factor of WF is at most $K_c$.
    \label{theorem-wf-ub}
\end{theorem}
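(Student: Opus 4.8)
The plan is to reduce the bound to a single per-group estimate. Write $\Phi^\star := \text{OPT}(I)$ for the optimal job completion time and fix an optimal solution $\{n_m^k\}_{m,k}$ of $\mathcal{P}$. I would show (a) $\text{WF}(I) \le \max_{k \in \mathcal{K}_c} \xi_k$, and (b) $\xi_k \le k\,\Phi^\star$ for every $k \in \mathcal{K}_c$; together these give $\text{WF}(I) \le \max_{k} k\Phi^\star = K_c\,\Phi^\star = K_c \cdot \text{OPT}(I)$, which is the theorem.

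First I would record two preliminary facts. \textit{(i)} WF never decreases a busy time, and whenever a server $m$ participates in group $k$ it is updated by \eqref{bk_update} to $\xi_k$ (participation means $\xi_k > b_m^c(k-1) \ge b_m^c$). Since a participating server participates in at least one group, its final busy time equals $\max\{\xi_k : m \text{ participates in group } k\}$, hence $\text{WF}(I) \le \max_{k\in\mathcal{K}_c} \xi_k$. \textit{(ii)} From feasibility of $\{n_m^k\}_{m,k}$: the second constraint of \eqref{cons} gives $\sum_{m \in \mathcal{S}_c^k} n_m^k \mu_m^c \ge |\mathcal{T}_c^k|$, while the first gives $n_m^k \le \sum_{k'} n_m^{k'} \le \max\{\Phi^\star - b_m^c, 0\}$; combining and dropping the (zero) terms with $b_m^c \ge \Phi^\star$,
\[
\sum_{m \in \mathcal{S}_c^k,\ b_m^c < \Phi^\star} (\Phi^\star - b_m^c)\,\mu_m^c \;\ge\; |\mathcal{T}_c^k| \qquad \text{for every } k \in \mathcal{K}_c .
\]
Intuitively, the head-room below level $\Phi^\star$ of group $k$'s available servers already suffices to host all of group $k$'s tasks.

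The main step is to prove (b) by induction on $k \ge 1$. From the induction hypothesis applied to $\xi_1, \dots, \xi_{k-1}$ (vacuous when $k=1$) together with $b_m^c(k-1) = \max\big(\{b_m^c\} \cup \{\xi_{k'} : k' < k,\ m \in \mathcal{S}_c^{k'}\}\big)$, every server satisfies $b_m^c(k-1) \le \max\{b_m^c, (k-1)\Phi^\star\}$. I would then check that the integer $k\Phi^\star$ satisfies the defining inequality \eqref{xik} of $\xi_k$: for each $m \in \mathcal{S}_c^k$ with $b_m^c < \Phi^\star$,
\[
\max\{k\Phi^\star - b_m^c(k-1),\,0\} \;\ge\; k\Phi^\star - \max\{b_m^c, (k-1)\Phi^\star\} \;=\; \min\{k\Phi^\star - b_m^c,\ \Phi^\star\} \;\ge\; \Phi^\star - b_m^c \;>\; 0 ,
\]
so that $\sum_{m \in \mathcal{S}_c^k} \max\{k\Phi^\star - b_m^c(k-1),0\}\,\mu_m^c \ge \sum_{m \in \mathcal{S}_c^k,\ b_m^c < \Phi^\star} (\Phi^\star - b_m^c)\,\mu_m^c \ge |\mathcal{T}_c^k|$ by fact \textit{(ii)}. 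By minimality of $\xi_k$ this forces $\xi_k \le k\Phi^\star$, closing the induction. Combining with fact \textit{(i)} finishes the proof.

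The crux, and the only non-routine part, is the inductive step: accounting for the capacity that WF ``wastes'' by raising busy times in rounds $1, \dots, k-1$. The reason it works is that each such round lifts any individual busy time by at most $\Phi^\star$ (which is precisely what the induction hypothesis encodes), so after $k-1$ rounds every server that OPT is allowed to use for group $k$ still sits at busy time at most $(k-1)\Phi^\star$ and hence retains at least $\Phi^\star - b_m^c$ of room below level $k\Phi^\star$; summing these rooms reproduces the OPT-feasibility bound of fact \textit{(ii)}. The remaining checks --- that $\xi_k$ is well-defined (the left side of \eqref{xik} is nondecreasing and unbounded in $\xi$), that $k\Phi^\star \in \mathbb{N}^+$ since $\Phi^\star \ge 1$, and that the $k=1$ instance coincides with $\xi_1 = x_1 \le \Phi^\star$ --- are straightforward.
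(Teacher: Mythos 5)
Your proof is correct, and it takes a genuinely different route from the paper's. The paper telescopes the WF completion time as $\textsc{WF}_1 + \sum_{k<K_c}(\textsc{WF}_{k+1}-\textsc{WF}_k)$ and bounds each increment by $\textsc{OPT}$ through a multi-way case analysis (whether group $k{+}1$'s servers overlap earlier participating servers, and how WF's participating set $\Omega_c^{k+1}$ compares with the participating set of an optimal assignment of group $k{+}1$ alone), which requires the auxiliary partition $\Omega_{c,\leq}^{k+1},\Omega_{c,>}^{k+1}$ and a careful accounting of backlogs beyond $\textsc{WF}_k$. You instead bound the water level directly, proving $\xi_k \leq k\cdot\textsc{OPT}$ by induction using only the feasibility of an optimal solution of $\mathcal{P}$: the headroom below level $\textsc{OPT}$ on group $k$'s servers suffices for $|\mathcal{T}_c^k|$, and the inductive hypothesis guarantees every server still retains at least that much headroom below level $k\cdot\textsc{OPT}$. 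Each step of your argument checks out: the unrolled form $b_m^c(k-1)=\max\bigl(\{b_m^c\}\cup\{\xi_{k'}: k'<k,\ m\in\mathcal{S}_c^{k'}\}\bigr)$ follows from \eqref{bk_update}, the chain $\max\{k\Phi^\star-b_m^c(k-1),0\}\geq\min\{k\Phi^\star-b_m^c,\Phi^\star\}\geq\Phi^\star-b_m^c$ is valid for $b_m^c<\Phi^\star$, and minimality of $\xi_k$ in \eqref{xik} then forces $\xi_k\leq k\Phi^\star$. What your approach buys is the elimination of the entire case analysis; moreover, since your fact \textit{(ii)} is exactly the statement $x_k\leq\Phi^\star$ with $x_k$ from \eqref{xk2}, your argument actually establishes the slightly stronger bound $\textsc{WF}\leq K_c\cdot\Phi_c^-$ against the combinatorial lower bound of Sec. \ref{s312}, not just against $\textsc{OPT}$. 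What the paper's approach buys in exchange is the per-step guarantee $\textsc{WF}_{k+1}-\textsc{WF}_k\leq\textsc{OPT}$ on actual completion times, which is a statement about the dynamics of WF that does not follow from your level-based bound alone. The minor points you defer (integrality of $k\Phi^\star$, well-definedness of $\xi_k$) are indeed routine.
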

\textcolor{black}{The proof is given in Appendix \ref{app_proof}.}

Theorems \ref{theorem-wf-lb} and \ref{theorem-wf-ub} collectively establish that WF has a tight approximation ratio of $K_c$, where $K_c$ is the number of task groups in the incoming job $c$.

\subsection{Replica-Deletion}\label{s33}
In this section, we introduce the Replica-Deletion (RD) heuristic, a task assignment strategy designed to minimize job completion time by iteratively balancing server workloads through the removal of redundant task replicas.

\subsubsection{Algorithm Design}\label{s331}
We illustrate RD with an example, as shown in Fig. \ref{rd}. 

\textbf{Initialization phase.}
When a new job $c$ arrives, each task $r \in \mathcal{T}_c$ is initially replicated $|\mathcal{S}^r$ times across all its available servers. For example, in Fig. \ref{rd}(i), the blue task has three available servers (servers 1, 2, and 5), so it is replicated on these three servers.

\begin{figure}[h]
    \centering
    \includegraphics[width=3.45in]{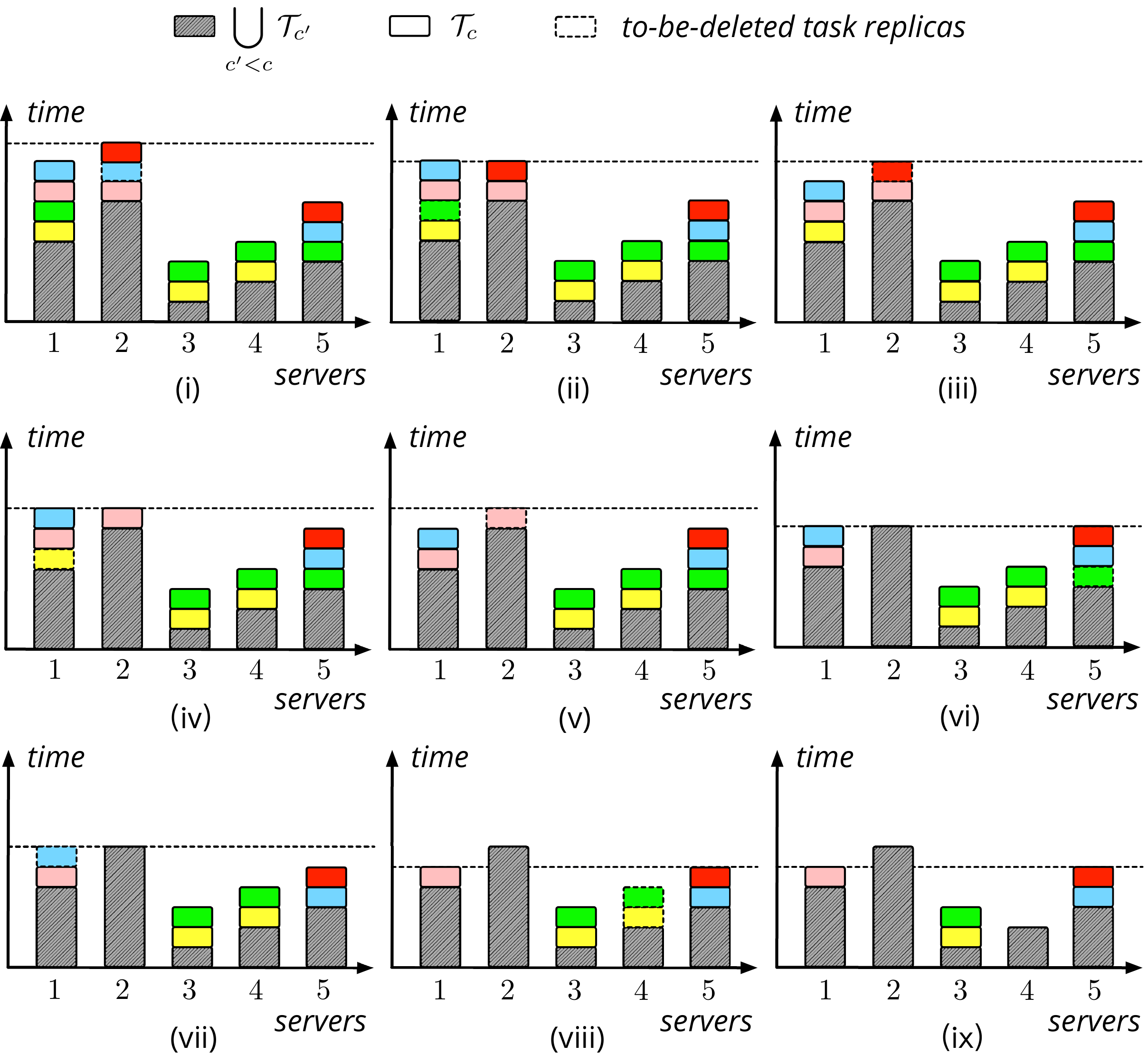}
    \caption{How RD works. (i)–(ix) demonstrate how to delete task replicas. In this example, $ \mu_m^c \equiv 1 $ for all the servers $ m \in \mathcal{M} $.}
    \label{rd}
\end{figure}

\textbf{Deletion phase.}
After initialization, RD enters the deletion phase to iteratively remove redundant task replicas while ensuring the estimated busy times of participating servers remain as balanced as possible. The steps in each deletion iteration are as follows.
\begin{enumerate}
    \item \textbf{Identify target server(s).} RD identifies the server(s) with the largest estimated busy time, referred to as the \textit{target server(s)} ($m^\star$). For example, in Fig.~\ref{rd}(i), server 2 is the target server in the first iteration.
    \item \textbf{Delete task replicas.} RD removes up to $\mu_{m^\star}^c$ task replicas from the target server to reduce its estimated busy time by one time slot. Among the tasks on the target server, the task with the largest number of replicas is prioritized for deletion. For example, in Fig.~\ref{rd}(i), server 2 hosts a red task with two replicas, a blue task with three replicas, and a pink task with two replicas. RD removes the blue task replica since it has the highest replication count.
    \item \textbf{Handle ties.} If multiple tasks have the same maximum number of replicas, RD selects the server with the largest initial estimated busy time (i.e., before job $c$’s tasks are assigned) as the target. For example, in Fig.~\ref{rd2}, servers 1 and 5 both have three replicas of the blue and green tasks, respectively. RD deletes the blue task replica from server 1 because server 1 has a higher initial estimated busy time.
    \item \textbf{Repeat until termination.} The deletion phase continues until all tasks in the target servers have only one replica. At this point, the completion time of job $c$ cannot be further reduced. For example, in Fig.~\ref{rd}(viii), RD terminates the deletion phase when all tasks in servers 1 and 5 have no other replicas.
\end{enumerate}

\textbf{Finalization phase.} In the finalization phase, RD removes any remaining redundant task replicas to ensure that each task is assigned to only one server. This is done by continuing the replica removal process, focusing on servers with the highest estimated busy times. For example, in Fig.~\ref{rd}(ix), RD deletes the yellow and green task replicas from server 4 to balance workloads among the remaining participating servers.



\begin{figure}[h]
    \centering
    \includegraphics[width=2.16in]{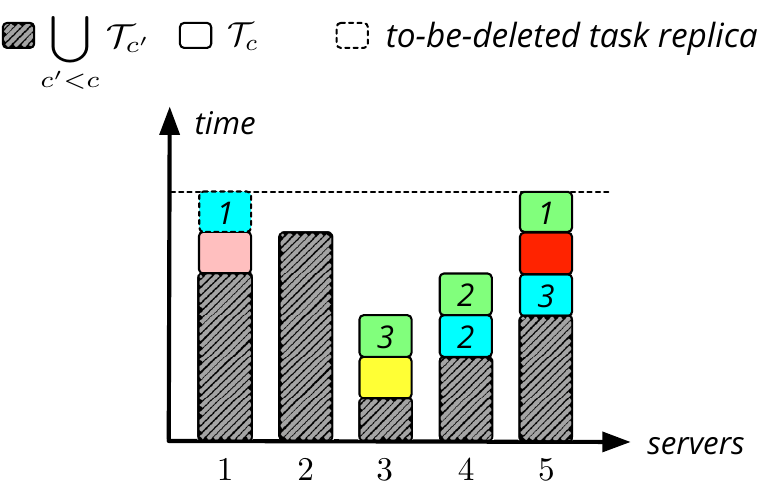}
    \caption{A target server with a larger initial estimated busy time has a task replica removed.}
    \label{rd2}
\end{figure}


RD differs from WF in its global perspective on task assignment. RD evaluates all available servers when deciding which task replica to delete, enabling it to achieve a more balanced workload distribution across the entire system. In contrast, WF operates locally by dividing tasks into groups and balancing workloads within each group during each iteration. While RD’s global approach often yields better performance, it also increases the computational complexity of the assignment process.

\subsubsection{Complexity Analysis}\label{s332}

In the implementation of RD, we use a priority queue to manage all servers, prioritizing them by their estimated busy times and breaking ties using their initial estimated busy times. Additionally, each server \( m \) maintains its own priority queue to manage the tasks assigned to it, prioritizing tasks by their number of replicas. In the worst-case scenario, each task is replicated across all servers, requiring \( \mathcal{O}(|\mathcal{T}_c| \cdot M) \) deletions. Each deletion operation affects the priority queues of all servers containing replicas of the deleted task. Updating the priority queue of up to \( M \) servers requires \( \mathcal{O}(M \cdot \log |\mathcal{T}_c|) \) time per deletion. Thus, the total time complexity of RD is:
$$
\mathcal{O} \Big( M^2 \cdot |\mathcal{T}_c| \cdot \log |\mathcal{T}_c| \Big).
$$
We experimentally compare the performance of RD with WF and other algorithms in Sec.~\ref{s5}. Theoretical performance analysis of RD is left as future work.

\section{Algorithms with Job Reordering}\label{s4}

This section considers an extended scenario in which outstanding job tasks queued on servers can be prioritized and reordered. By adjusting the execution order of outstanding jobs and reassigning their remaining tasks, the average completion time of jobs can be further reduced.

Inspired by \cite{hung2015scheduling} and \cite{9826037}, we adopt a similar approach to derive the job execution order, which emulates a \textit{shortest-remaining-time-first} (SRTF) policy for distributed job executions. When a new job $c$ arrives, reordering is triggered. Let $\mathcal{O}_c$ denote the set of outstanding jobs after the arrival of job $c$, and let $\mathcal{Q}_c$ represent the reordered list of outstanding jobs. The goal is to iteratively retrieve jobs from $\mathcal{O}_c$ and insert them into $\mathcal{Q}_c$ in a specific sequence until $|\mathcal{Q}_c| = |\mathcal{O}_c|$.

Assume there are $p$ jobs already sorted in $\mathcal{Q}_c$. For each job $c’ \in \mathcal{O}_c \backslash \mathcal{Q}_c$, we estimate its remaining time to completion, $\Phi_{c’}$, assuming it is the $(p+1)$-th job in the new order. This estimate, $\Phi_{c’}$, is derived using the WF algorithm, which considers the estimated busy times of servers, the current job order in $\mathcal{Q}_c$, and the unprocessed tasks of job $c’$.

Let $l$ represent the index of the job with the minimal estimated completion time $\Phi_l$ among all explored outstanding jobs. During the exploration of each job $c’$, we first compute a lower bound $\Phi_{c’}^-$ on the completion time, as defined in \eqref{xk} and \eqref{xk2} in Sec.~\ref{s312}. If $\Phi_{c’}^- \geq \Phi_l$, the exploration of job $c’$ terminates early. Otherwise, $\Phi_{c’}$ is computed using WF, and $l$ is updated if $\Phi_{c’} < \Phi_l$. This technique, known as \textit{early-exit}, significantly reduces the computation overhead for determining the $(p+1)$-th job in the new order. The rationale is that if $\Phi_{c’}^- \geq \Phi_l$, placing job $c’$ in the $(p+1)$ position will not result in a lower average job completion time.

\begin{algorithm}[!h]
\caption{OCWF-ACC}
  {\color{black}\KwIn{Online arriving jobs $1, 2, ..., c, ...$, server capacities $\{ \mu_m^c \}_{m,c}$}}
  {\color{black}\KwOut{Assignment solution for each task of each arriving job}}
\label{ocwf-acc}
    \While{a new job $c$ arrives}{
        $\mathcal{O}_t \leftarrow \mathcal{O}_t \cup { c }$\\
        $\mathcal{Q}_t \leftarrow \emptyset$\\
        Initialize estimated busy times of servers: $\forall m \in \mathcal{M}$, $b_m^c \leftarrow 0$\\
        \While{$|\mathcal{Q}_c| < |\mathcal{O}_c|$}{
            $l \leftarrow \texttt{nullptr}$\\
            \For{each job $c’ \in \mathcal{O}_t \backslash \mathcal{Q}_t$}{
                Get unprocessed tasks of job $c’$ and calculate $\Phi_{c’}^-$ using \eqref{xk} and \eqref{xk2}\\
                \uIf{$l == \texttt{nullptr}$ \textbf{or} $\Phi_{c’}^- < \Phi_l$}{
                    Solve the task assignment problem $\mathcal{P}$ for job $c’$ using WF to calculate $\Phi_{c’}$\\
                    \If{$l == \texttt{nullptr}$ \textbf{or} $\Phi_{c’} < \Phi_l$}{
                        Derive the task assignment for $c’$ using WF\\
                        $l \leftarrow c’$
                    }
                }
                \Else{\textbf{break}{\hfill \tcp{early-exit}}}
            }
            $\mathcal{Q}_c \leftarrow \mathcal{Q}_c \cup \{ l \}$\\
            For each $m \in \mathcal{M}$, update the estimated busy time of server $m$
        }
    }
\end{algorithm}


We name the algorithm as OCWF-ACC (accelerated order-conscious WF). Algorithm \ref{ocwf-acc} presents the details of OCWF-ACC. Note that WF can be replaced by other task assignment algorithms.

\section{Experimental Results}\label{s5}

We conduct extensive simulations to verify the performance and efficiency of the proposed algorithms. We first present the simulation settings. Then, we show and discuss the results.

\subsection{Experimental Setup}\label{s6.1}

\begin{figure*}[ht]
  \centering
  \includegraphics[width=6.9in]{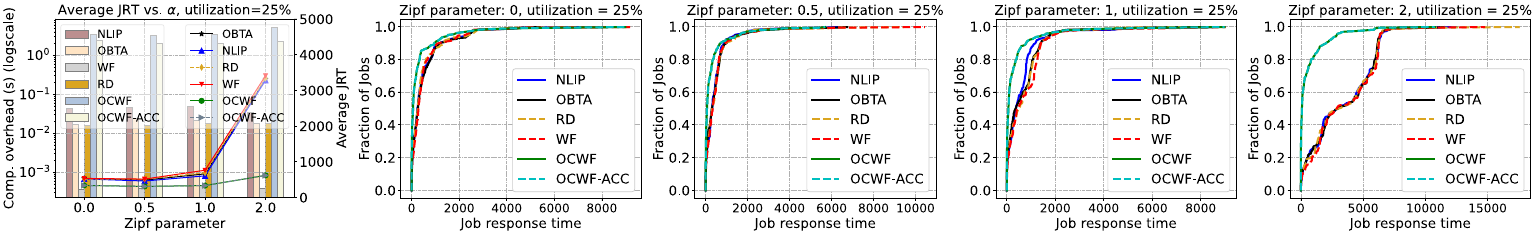}
  \caption{Average job completion time, computation overheads, and CDF of job completion times under 25\% system utilization for different algorithms.}
  \label{exp1}
\end{figure*}

\begin{figure*}[ht]
  \centering
  \includegraphics[width=6.9in]{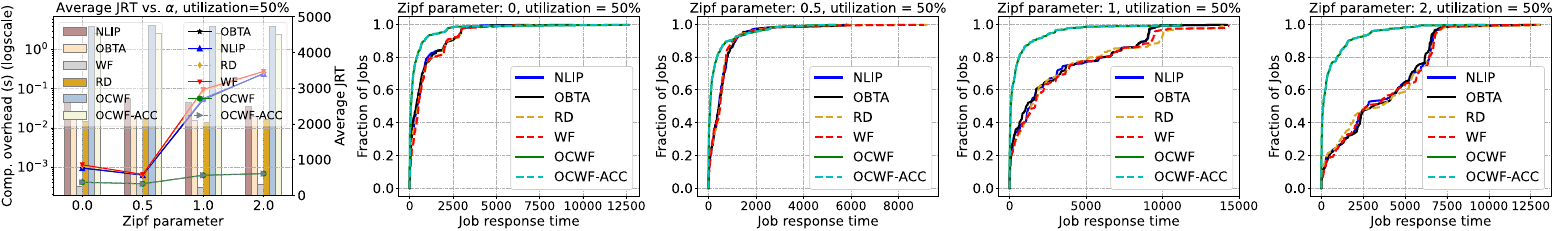}
  \caption{Average job completion time, computation overheads, and CDF of job completion times under 50\% system utilization for different algorithms.}
  \label{exp2}
\end{figure*}

\begin{figure*}[ht]
  \centering
  \includegraphics[width=6.9in]{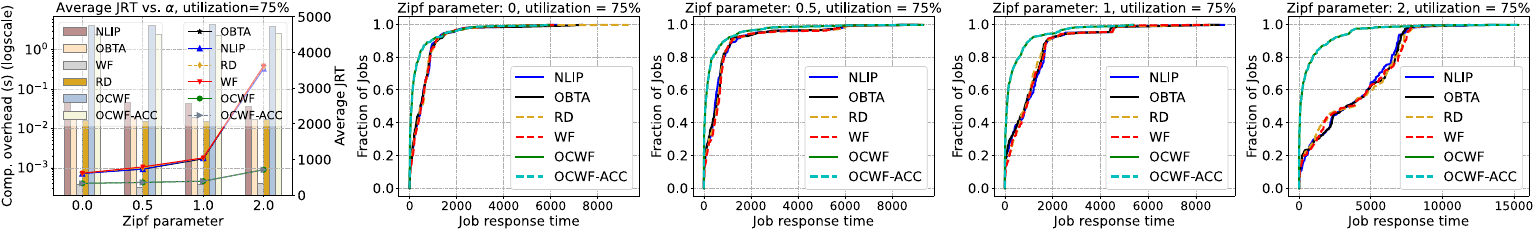}
  \caption{Average job completion time, computation overheads, and CDF of job completion times under 75\% system utilization for different algorithms.}
  \label{exp3}
\end{figure*}

\textbf{Job Traces.}
We used the Alibaba cluster trace dataset \cite{trace} for simulation. Specifically, a segment from the dataset \texttt{batch\_task.csv} in \texttt{cluster-trace-v2017} was selected, containing 250 jobs with a total of 113,653 tasks. Job arrivals and task durations were derived from the timestamps of recorded task events. The inter-arrival times of jobs were scaled to simulate system utilizations of 25\%, 50\%, and 75\%. The default number of servers was 100, with each server’s processing capacity for a given job randomly generated between 3 and 5 by default.

\textbf{Available Servers.} Each job entry in the trace corresponds to a task group. On average, each job contains 5.52 task groups. Data inputs for task groups were distributed across servers according to a Zipf distribution. For each task group, a random permutation of all servers was generated, and the task group was associated with the $i$-th server in the permutation with a probability proportional to $\frac{1}{i^\alpha}$. Here, $\alpha \in [0, 2]$ is the Zipf skew parameter, where $\alpha = 0$ indicates a uniform distribution. For a task group associated with server $m$, servers $m, m+1, \dots, m+p-1$ were chosen as its available servers, where $p$ was randomly generated between 8 and 12 by default.

\textbf{Algorithms.} 
We implemented six algorithms using DOcplex:\footnote{\url{https://ibmdecisionoptimization.github.io/docplex-doc/index.html}} NLIP, OBTA, WF, RD, OCWF, and OCWF-ACC. 
\begin{itemize}
    \item \textbf{NLIP} solves the non-linear program $\mathcal{P}$ directly for each job, without narrowing the search space or dividing it into subranges, as in OBTA.
    \item \textbf{OBTA} refines NLIP by reducing the search space for $\Phi_c$, improving efficiency while maintaining optimality  (see Sec. \ref{s31}).
    \item \textbf{WF} assigns tasks approximately in a water-filling manner to achieve a good balance between performance and overhead  (see Sec. \ref{s32}).
    \item \textbf{RD} employs a task assignment strategy based on an initialization-then-deletion process, aiming to iteratively balance server workloads (see Sec. \ref{s33}).
    \item \textbf{OCWF} reorders jobs to prioritize those with shorter remaining times but does not employ the early-exit optimization used in OCWF-ACC.
    \item \textbf{OCWF-ACC} extends OCWF with the early-exit technique to reduce computational overhead  (see Sec. \ref{s4}).
\end{itemize}

\textbf{Metrics.} We use the average job completion time of all jobs to measure performance and the computation overhead of each algorithm to measure efficiency.

The code is available online.\footnote{https://github.com/hliangzhao/taos}

\subsection{Performance Evaluation}\label{s6.2}
We compare the performance and efficiency of the algorithms by varying the Zipf parameter $\alpha$ from 0 to 2 and the system utilization from 25\% to 75\%. Figs. \ref{exp1}-\ref{exp3} present the results. In these figures, the right y-axis in the first subfigure indicates the average job completion time, while the left y-axis shows the log-scaled average computation overhead per job arrival. The other four subfigures visualize the cumulative distribution of job completion times for different $\alpha$ values. The key findings are summarized as follows.

\textit{Compared with NLIP, OBTA achieves a significant improvement in efficiency.} From Figs. \ref{exp1}-\ref{exp3}, we observe that OBTA and NLIP have fairly close performance of job completion time since both are theoretically optimal for balanced task assignment. By narrowing the search space and dividing it into subranges, OBTA reduces the computation overhead by nearly half compared to NLIP. This verifies our contribution in Sec. \ref{s32}. The notable reduction in computational overhead without sacrificing performance demonstrates the scalability of OBTA. This is especially relevant for large-scale systems where tasks arrive dynamically, and the computational cost of solving NLIP becomes a bottleneck. OBTA, by efficiently pruning the search space, provides a more practical approach for real-world applications, offering significant gains in execution speed while maintaining optimal task assignment.

\textit{WF closely approximates balanced task assignment with extremely low overhead.} WF performs close to OBTA and NLIP at almost all percentiles of job completion times, and its computation overhead is two orders of magnitude lower than OBTA’s. The ultra-low overhead offers a significant advantage and facilitates job reordering. The lightweight nature of WF makes it particularly appealing in scenarios where the frequency of task arrivals is high, and real-time decisions are necessary. Although WF sacrifices some degree of optimality, its approximation is sufficiently close to the theoretical optima for most practical purposes. The low computational cost not only accelerates the scheduling process but also allows it to be integrated effectively into more complex frameworks like OCWF and OCWF-ACC.

\textit{RD generally produces better task assignments than WF.} Our results show that RD generally outperforms WF. This observation is detailed in Table \ref{tab}, which presents the performance of each algorithm when $\alpha = 2$ and system utilization is at 75\%, as shown in the first figure of Fig.~\ref{exp4}. However, it is noteworthy that RD has a higher computation overhead than WF, which aligns with our complexity analysis. The time complexity of RD is quantified as $\mathcal{O}(M^2 \cdot n \log n)$, which is higher than WF’s $\mathcal{O}(KM \cdot \log n)$, where $M$ is the number of servers, $n$ is the number of tasks, and $K$ is the number of task groups in a job. RD’s global perspective on load balancing allows it to achieve better overall performance than WF in terms of job completion times. However, its relatively higher computational overhead makes it less suitable for latency-sensitive applications. Despite this, RD’s balancing strategy is particularly effective in systems with heterogeneous resources or non-uniform task distributions, where WF might underperform due to its localized approach.

\begin{table}[htbp]
\centering
\caption{\label{key}Average job completion time vs. \#available servers.}   
\begin{tabular}{ccccccc}
\toprule
\multirow{2}{*}{\textbf{Algorithms}} & \multicolumn{6}{c}{\textbf{Number of available servers}}                                                                                                   \\
                            & 4                     & 6                     & 8                     & 10                    & 12                 & Aver. \\ \midrule
OBTA                        & \multicolumn{1}{r}{12551} & \multicolumn{1}{r}{5814} & \multicolumn{1}{r}{4217} & \multicolumn{1}{r}{3609} & \multicolumn{1}{r}{3160} & \multicolumn{1}{r}{5870} \\
NLIP                        & \multicolumn{1}{r}{12468} & \multicolumn{1}{r}{5924} & \multicolumn{1}{r}{4394} & \multicolumn{1}{r}{3547} & \multicolumn{1}{r}{3164} & \multicolumn{1}{r}{5899}  \\
WF                          & \multicolumn{1}{r}{12872} & \multicolumn{1}{r}{6125} & \multicolumn{1}{r}{4408} & \multicolumn{1}{r}{3617} & \multicolumn{1}{r}{3188} & \multicolumn{1}{r}{6042} \\
RD                          & \multicolumn{1}{r}{12857} & \multicolumn{1}{r}{5987} & \multicolumn{1}{r}{4261} & \multicolumn{1}{r}{3595} & \multicolumn{1}{r}{3150} &  \multicolumn{1}{r}{5970}  \\
OCWF                        & \multicolumn{1}{r}{1699} & \multicolumn{1}{r}{958} & \multicolumn{1}{r}{765} & \multicolumn{1}{r}{712} & \multicolumn{1}{r}{657} &  \multicolumn{1}{r}{958}  \\
OCWF-ACC                    & \multicolumn{1}{r}{1699} & \multicolumn{1}{r}{958} & \multicolumn{1}{r}{765} & \multicolumn{1}{r}{712} & \multicolumn{1}{r}{657} &  \multicolumn{1}{r}{958}  \\ \bottomrule
\end{tabular}
\label{tab}
\end{table}

\textit{OCWF-ACC and OCWF are robust to skewness in data availability.} As shown in Figs.~\ref{exp1}-\ref{exp3}, when the skew parameter $\alpha$ changes from 0 to 2, the average job completion time achieved by FIFO-based algorithms (NLIP, OBTA, WF, and RD) shows a clear upward trend. In contrast, OCWF and OCWF-ACC do not exhibit a significant increase in average job completion time. This demonstrates that reordering and reallocation are crucial for handling skewness in data availability. The robustness of OCWF and OCWF-ACC against skewed data distributions underscores the importance of reordering in mitigating the impact of resource contention. By dynamically adjusting job priorities based on estimated completion times, these algorithms effectively balance workloads across servers, even under highly non-uniform conditions.

\textit{OCWF-ACC significantly accelerates OCWF.} The computation overhead of OCWF-ACC is only about half of OCWF’s. This verifies the advantage of the early-exit technique. The early-exit optimization not only reduces the time required for job reordering but also enhances the scalability of OCWF-ACC. This makes it a practical choice for systems with stringent latency requirements, as it balances the trade-off between computational efficiency and scheduling effectiveness.

\begin{figure*}[th]
  \centering
  \includegraphics[width=6.9in]{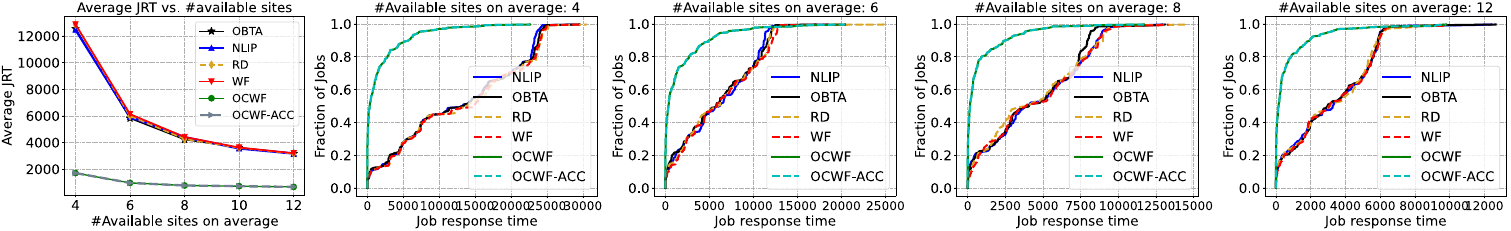}
  \caption{Average job completion time, and CDF of job completion times when $\alpha = 2$ under 75\% system utilization for different numbers of available servers.}
  \label{exp4}
\end{figure*}

\begin{figure*}[th]
  \centering
  \includegraphics[width=6.9in]{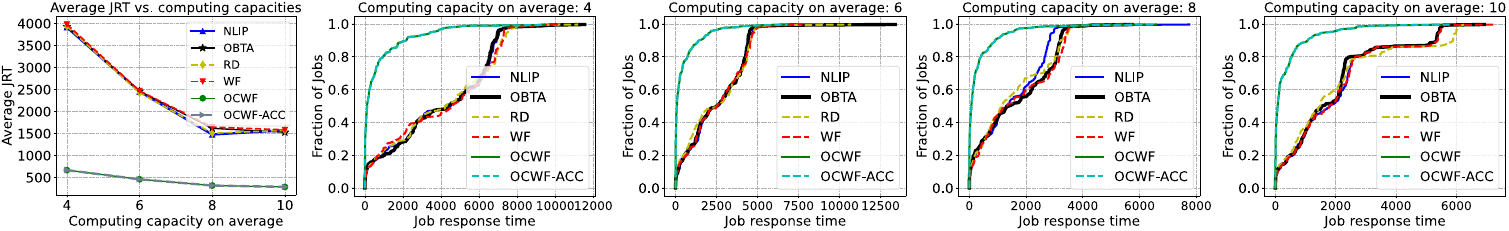}
  \caption{Average job completion time, and CDF of job completion times when $\alpha = 2$ under 75\% system utilization for different processing capacities.}
  \label{exp5}
\end{figure*}

Figs.~\ref{exp1}-\ref{exp3} illustrate that job completion times increase with system utilization. The performance trends discussed above are consistently observed across different levels of system utilization. Higher utilization levels amplify resource contention, which tests the scalability and robustness of the scheduling algorithms. While FIFO-based approaches struggle under these conditions, the reordering and reallocation capabilities of OCWF and OCWF-ACC enable them to maintain stable performance, even at high utilization.

Figs. \ref{exp4}-\ref{exp5} show the job completion times for varying numbers of available servers and profiled processing capacities, respectively. These experiments are conducted with $\alpha = 2$ and a system utilization of 75\%, as we are particularly interested in the performance when resource contention is high. From Fig. \ref{exp4}, we observe that a larger number of available servers decreases job completion times. More available servers provide greater flexibility in task assignment, allowing tasks to be distributed in a manner that better balances the load among servers and reduces job completion times. Fig.~\ref{exp5} demonstrates that increasing processing capacity also reduces job completion times. This is expected, as higher processing capacities enable more tasks to run in parallel, thereby decreasing job completion times. Notably, the relative performance of the algorithms remains largely unchanged across different numbers of available servers and processing capacities.

{\color{black}
\subsection{Addition Evaluation on Algorithm Efficiency}\label{s6.3}
In this section, we evaluate the effectiveness of OBTA's two key optimization components: \textit{solution-space narrowing} and \textit{piecewise linearization}, by comparing OBTA with its variants. To isolate and evaluate the impact of each optimization strategy in OBTA, we implemented and compared it with three alternative approaches:
\begin{itemize}
    \item \textbf{LIP:} A direct solution to the linearized version of the original problem without any search-space reduction. The objective of the linearized problem $\mathcal{P}'$ is the same as $\mathcal{P}$, but the first set of non-linear constraints in \eqref{cons} are transformed into the following ones:
    \begin{align}
        \forall m \in \bigcup_{k \in \mathcal{K}_c} \mathcal{S}_c^k: 
        \left\{
        \begin{array}{l}
             \sum_{k \in \mathcal{K}_c} n_m^k \leq z_m, \\
             z_m \geq \Phi_c - b_m^c, \\
             z_m \geq 0, \\
             z_m \leq \Phi_c - b_m^c + \lambda \cdot y_{m,1}, \\
             z_m \leq \lambda \cdot y_{m,2}, \\
             y_{m,1} + y_{m,2} \leq 1, \\
             y_{m,1}, y_{m,2} \in \{ 0, 1 \},
        \end{array}
        \right.
    \end{align}
    where $\lambda$ is a sufficiently large number.
    \item \textbf{OBTA-N:} A variant of OBTA that removes the piecewise linearization component and retains only the solution-space narrowing mechanism.
    \item \textbf{OBTA-P:} A variant of OBTA that removes the solution-space narrowing step and applies full-range piecewise linearization.
\end{itemize}

\begin{table}[htbp]
    \color{black}
    \centering
    \caption{Computation overheads of algorithms to complete the scheduling of all jobs (in second).}
    \label{new_cmp_overhead}
    \begin{tabular}{ccccccc}
        \toprule
        \multirow{2}{*}{\textbf{Algorithms}} & \multicolumn{4}{c}{\textbf{Zipf Parameter $\alpha$}} \\
                                             & $\alpha=0$ & $\alpha=0.5$ & $\alpha=1$ & $\alpha=2$ \\ 
        \midrule
        OBTA                                 & \multicolumn{1}{r}{14.58} & \multicolumn{1}{r}{15.75} & \multicolumn{1}{r}{22.31} & \multicolumn{1}{r}{40.54} \\
        NLIP                                 & \multicolumn{1}{r}{21.60} & \multicolumn{1}{r}{24.42} & \multicolumn{1}{r}{33.89} & \multicolumn{1}{r}{48.17} \\
        LIP                                  & \multicolumn{1}{r}{17.79} & \multicolumn{1}{r}{17.89} & \multicolumn{1}{r}{27.82} & \multicolumn{1}{r}{42.06} \\
        OBTA-N                               & \multicolumn{1}{r}{18.59} & \multicolumn{1}{r}{18.97} & \multicolumn{1}{r}{25.49} & \multicolumn{1}{r}{43.96}     \\
        OBTA-P                               & \multicolumn{1}{r}{131.26} & \multicolumn{1}{r}{133.94} & \multicolumn{1}{r}{136.59} & \multicolumn{1}{r}{85.42} \\
        WF                                   & \multicolumn{1}{r}{10.12} & \multicolumn{1}{r}{9.63} & \multicolumn{1}{r}{16.16} & \multicolumn{1}{r}{37.25}     \\ 
        RD                                   & \multicolumn{1}{r}{12.48} & \multicolumn{1}{r}{10.98} & \multicolumn{1}{r}{15.93} & \multicolumn{1}{r}{37.79}     \\
        \bottomrule
    \end{tabular}
\end{table}

\begin{table}[htbp]
    \color{black}
    \centering
    \caption{Average job completion time achieved by algorithms.}
    \label{new_cmp_jrt}
    \begin{tabular}{ccccccc}
        \toprule
        \multirow{2}{*}{\textbf{Algorithms}} & \multicolumn{4}{c}{\textbf{Zipf Parameter $\alpha$}} \\
                                             & $\alpha=0$ & $\alpha=0.5$ & $\alpha=1$ & $\alpha=2$ \\ 
        \midrule
        OBTA                                 & \multicolumn{1}{r}{694} & \multicolumn{1}{r}{823} & \multicolumn{1}{r}{1517} & \multicolumn{1}{r}{7081} \\
        NLIP                                 & \multicolumn{1}{r}{709} & \multicolumn{1}{r}{813} & \multicolumn{1}{r}{1494} & \multicolumn{1}{r}{7164} \\
        LIP                                  & \multicolumn{1}{r}{682} & \multicolumn{1}{r}{798} & \multicolumn{1}{r}{1571} & \multicolumn{1}{r}{7183} \\
        OBTA-N                               & \multicolumn{1}{r}{704} & \multicolumn{1}{r}{831} & \multicolumn{1}{r}{1518} & \multicolumn{1}{r}{7151}     \\
        OBTA-P                               & \multicolumn{1}{r}{703} & \multicolumn{1}{r}{818} & \multicolumn{1}{r}{1548} & \multicolumn{1}{r}{7213} \\
        WF                                   & \multicolumn{1}{r}{755} & \multicolumn{1}{r}{861} & \multicolumn{1}{r}{1569} & \multicolumn{1}{r}{7270}     \\ 
        RD                                   & \multicolumn{1}{r}{772} & \multicolumn{1}{r}{852} & \multicolumn{1}{r}{1567} & \multicolumn{1}{r}{7252}     \\
        \bottomrule
    \end{tabular}
\end{table}

We evaluated all four algorithms: OBTA, LIP, OBTA-N, and OBTA-P, alongside WF, RD, and NLIP under the experimental settings demonstrated in Sec. \ref{s6.1} (the system utilization is set to 75\% by default). As shown in Tables \ref{new_cmp_overhead} and \ref{new_cmp_jrt}, in terms of solution quality, all algorithms perform similarly, indicating that the linearization and search-space reduction do not compromise optimality. However, in terms of computational efficiency: WF $>$ RD $>$ OBTA $>$ LIP $>$ OBTA-N $>$ NLIP $>$ OBTA-P, where `$>$' indicates lower computational cost (i.e., faster execution). These results demonstrate that both components of OBTA contribute to its efficiency. In particular, solution-space narrowing (OBTA-N) achieves the lowest runtime among all variants, suggesting that this component has the most significant impact on reducing computational overhead. Meanwhile, the combination of both strategies in OBTA achieves the best balance between efficiency and stability across different workloads.

\subsection{Robustness Under Large-Scale and Heterogeneous Settings}\label{s6.4}
To further assess scalability and robustness, we tested the algorithms under larger-scale scenarios and heterogeneous server environments.

First, we increased the input size to include 500 jobs, 256 servers, and 173,848 tasks, significantly expanding the solution space. The system utilization is set to 75\% by default. As shown in Table~\ref{new_large_scale}, the relative performance trends remain consistent, confirming that OBTA and its variants scale effectively with increasing problem size.

\begin{table*}[htbp]
    \color{black}
    \centering
    \caption{Large-scale experiments with 75\% system utilization and Zipf parameter $\alpha$ being 2.}
    \label{new_large_scale}
    \begin{tabular}{cccccccccc}
        \toprule
        \textbf{Algorithms}                      & OBTA & NLIP & LIP & OBTA-N & OBTA-P & WF & RD & OCWF & OCWF-ACC \\ 
        \midrule
        \textbf{Overhead (s)}  & 89 & 96 & 104 & 101 & 221 & 78 & 77 & 6538 & 3118 \\
        \textbf{Average job completion time}                     & 8639 & 8764 & 9059 & 8716 & 8878 & 9281 & 8873 & 1189 & 1234 \\
        \bottomrule
    \end{tabular}
\end{table*}

Next, we evaluated the algorithms under various server capacity distributions to simulate heterogeneous computing environments. Specifically, we considered uniform, power-law, log-normal, and weighted distributions. These models reflect real-world diversity in server capabilities. Tables~\ref{new_jrt_cap} and \ref{new_overhead_cap} summarize the average job completion time and computational overheads under these distributions. Across all settings, the relative rankings of algorithmic efficiency remain stable, demonstrating that the performance improvements achieved by OBTA are not sensitive to infrastructure heterogeneity. These experiments confirm that OBTA maintains its efficiency advantage even when scaling to large workloads, and the proposed algorithms perform consistently well under diverse server capacity distributions, showing their robustness to system heterogeneity.

\begin{table}[htbp]
    \color{black}
    \centering
    \caption{Average job completion time achieved by algorithms under different server capacity distributions.}
    \label{new_jrt_cap}
    \begin{tabular}{ccccccc}
        \toprule
        \multirow{2}{*}{\textbf{Algorithms}} & \multicolumn{4}{c}{\textbf{Types of Server Capacity Distribution}} \\
                                             & Uniform & Power-law & Log-normal & Weighted \\
        \midrule
        OBTA                                 & \multicolumn{1}{r}{6713} & \multicolumn{1}{r}{6530} & \multicolumn{1}{r}{7667} & \multicolumn{1}{r}{8872} \\
        NLIP                                 & \multicolumn{1}{r}{6935} & \multicolumn{1}{r}{6803} & \multicolumn{1}{r}{7833} & \multicolumn{1}{r}{8987} \\
        LIP                                  & \multicolumn{1}{r}{6865} & \multicolumn{1}{r}{6433} & \multicolumn{1}{r}{7878} & \multicolumn{1}{r}{9062} \\
        OBTA-N                               & \multicolumn{1}{r}{7076} & \multicolumn{1}{r}{6778} & \multicolumn{1}{r}{7602} & \multicolumn{1}{r}{8899} \\
        OBTA-P                               & \multicolumn{1}{r}{6821} & \multicolumn{1}{r}{6549} & \multicolumn{1}{r}{7622} & \multicolumn{1}{r}{8812} \\
        WF                                   & \multicolumn{1}{r}{6966} & \multicolumn{1}{r}{6599} & \multicolumn{1}{r}{7673} & \multicolumn{1}{r}{8846} \\
        RD                                   & \multicolumn{1}{r}{7071} & \multicolumn{1}{r}{6819} & \multicolumn{1}{r}{7691} & \multicolumn{1}{r}{8856} \\
        OCWF                                 & \multicolumn{1}{r}{1172} & \multicolumn{1}{r}{1118} & \multicolumn{1}{r}{1634} & \multicolumn{1}{r}{1674} \\
        OCWF-ACC                             & \multicolumn{1}{r}{1222} & \multicolumn{1}{r}{1176} & \multicolumn{1}{r}{1634} & \multicolumn{1}{r}{1674} \\
        \bottomrule
    \end{tabular}
\end{table}

\begin{table}[htbp]
    \color{black}
    \centering
    \caption{Computation overheads of algorithms under different server capacity distributions (in second).}
    \label{new_overhead_cap}
    \begin{tabular}{ccccccc}
        \toprule
        \multirow{2}{*}{\textbf{Algorithms}} & \multicolumn{4}{c}{\textbf{Types of Server Capacity Distribution}} \\
                                             & Uniform & Power-law & Log-normal & Weighted \\
        \midrule
        OBTA                                 & \multicolumn{1}{r}{36.61} & \multicolumn{1}{r}{34.27} & \multicolumn{1}{r}{50.11} & \multicolumn{1}{r}{54.37} \\
        NLIP                                 & \multicolumn{1}{r}{39.97} & \multicolumn{1}{r}{39.35} & \multicolumn{1}{r}{57.73} & \multicolumn{1}{r}{57.08} \\
        LIP                                  & \multicolumn{1}{r}{40.75} & \multicolumn{1}{r}{36.88} & \multicolumn{1}{r}{52.26} & \multicolumn{1}{r}{56.03} \\
        OBTA-N                               & \multicolumn{1}{r}{42.09} & \multicolumn{1}{r}{38.06} & \multicolumn{1}{r}{51.96} & \multicolumn{1}{r}{56.84} \\
        OBTA-P                               & \multicolumn{1}{r}{85.23} & \multicolumn{1}{r}{75.35} & \multicolumn{1}{r}{99.38} & \multicolumn{1}{r}{103.93} \\
        WF                                   & \multicolumn{1}{r}{32.63} & \multicolumn{1}{r}{29.22} & \multicolumn{1}{r}{43.16} & \multicolumn{1}{r}{45.95} \\
        RD                                   & \multicolumn{1}{r}{34.76} & \multicolumn{1}{r}{31.67} & \multicolumn{1}{r}{46.03} & \multicolumn{1}{r}{47.05} \\
        OCWF                                 & \multicolumn{1}{r}{578.82} & \multicolumn{1}{r}{725.53} & \multicolumn{1}{r}{769.46} & \multicolumn{1}{r}{780.93} \\
        OCWF-ACC                             & \multicolumn{1}{r}{410.00} & \multicolumn{1}{r}{399.36} & \multicolumn{1}{r}{512.85} & \multicolumn{1}{r}{508.66} \\
        \bottomrule
    \end{tabular}
\end{table}

To test the sensitivity of our approaches to how tasks map to available servers, we modified the distribution of server availability. Specifically, we replaced the original Zipf-based assignment with log-normal, exponential, and uniform distributions. As shown in Tables \ref{as_jrt} and \ref{as_overhead}, the experimental results further support our initial conclusions, indicating that the performance trends of our algorithms are largely insensitive to the specific distribution patterns.

\begin{table}[htbp]
    \color{black}
    \centering
    \caption{Average job completion time achieved by algorithms under different available server distributions.}
    \label{as_jrt}
    \begin{tabular}{ccccccc}
        \toprule
        \multirow{2}{*}{\textbf{Algorithms}} & \multicolumn{4}{c}{\textbf{Types of Available Server Distribution}} \\
                                             & Zipf & Log-normal & Exponential & Uniform \\
        \midrule
        OBTA                                 & \multicolumn{1}{r}{6353} & \multicolumn{1}{r}{800} & \multicolumn{1}{r}{783} & \multicolumn{1}{r}{767} \\
        NLIP                                 & \multicolumn{1}{r}{6513} & \multicolumn{1}{r}{803} & \multicolumn{1}{r}{780} & \multicolumn{1}{r}{745} \\
        LIP                                  & \multicolumn{1}{r}{6300} & \multicolumn{1}{r}{790} & \multicolumn{1}{r}{804} & \multicolumn{1}{r}{784} \\
        OBTA-N                               & \multicolumn{1}{r}{6410} & \multicolumn{1}{r}{770} & \multicolumn{1}{r}{777} & \multicolumn{1}{r}{747} \\
        OBTA-P                               & \multicolumn{1}{r}{6486} & \multicolumn{1}{r}{816} & \multicolumn{1}{r}{779} & \multicolumn{1}{r}{740} \\
        WF                                   & \multicolumn{1}{r}{6529} & \multicolumn{1}{r}{825} & \multicolumn{1}{r}{804} & \multicolumn{1}{r}{771} \\
        RD                                   & \multicolumn{1}{r}{6750} & \multicolumn{1}{r}{818} & \multicolumn{1}{r}{806} & \multicolumn{1}{r}{817} \\
        OCWF                                 & \multicolumn{1}{r}{1197} & \multicolumn{1}{r}{345} & \multicolumn{1}{r}{345} & \multicolumn{1}{r}{311} \\
        OCWF-ACC                             & \multicolumn{1}{r}{1204} & \multicolumn{1}{r}{339} & \multicolumn{1}{r}{342} & \multicolumn{1}{r}{333} \\
        \bottomrule
    \end{tabular}
\end{table}

\begin{table}[htbp]
    \color{black}
    \centering
    \caption{Computation overheads of algorithms under different available server distributions (in second).}
    \label{as_overhead}
    \begin{tabular}{ccccccc}
        \toprule
        \multirow{2}{*}{\textbf{Algorithms}} & \multicolumn{4}{c}{\textbf{Types of Available Server Distribution}} \\
                                             & Zipf & Log-normal & Exponential & Uniform \\
        \midrule
        OBTA                                 & \multicolumn{1}{r}{32.30} & \multicolumn{1}{r}{17.07} & \multicolumn{1}{r}{14.82} & \multicolumn{1}{r}{15.80} \\
        NLIP                                 & \multicolumn{1}{r}{35.93} & \multicolumn{1}{r}{22.85} & \multicolumn{1}{r}{21.25} & \multicolumn{1}{r}{19.60} \\
        LIP                                  & \multicolumn{1}{r}{36.14} & \multicolumn{1}{r}{19.94} & \multicolumn{1}{r}{17.29} & \multicolumn{1}{r}{18.37} \\
        OBTA-N                               & \multicolumn{1}{r}{36.17} & \multicolumn{1}{r}{20.17} & \multicolumn{1}{r}{18.61} & \multicolumn{1}{r}{19.48} \\
        OBTA-P                               & \multicolumn{1}{r}{76.16} & \multicolumn{1}{r}{117.04} & \multicolumn{1}{r}{110.96} & \multicolumn{1}{r}{107.29} \\
        WF                                   & \multicolumn{1}{r}{27.64} & \multicolumn{1}{r}{8.75} & \multicolumn{1}{r}{8.39} & \multicolumn{1}{r}{10.35} \\
        RD                                   & \multicolumn{1}{r}{31.21} & \multicolumn{1}{r}{10.21} & \multicolumn{1}{r}{9.89} & \multicolumn{1}{r}{9.98} \\
        OCWF                                 & \multicolumn{1}{r}{731.56} & \multicolumn{1}{r}{569.83} & \multicolumn{1}{r}{574.69} & \multicolumn{1}{r}{559.44} \\
        OCWF-ACC                             & \multicolumn{1}{r}{408.20} & \multicolumn{1}{r}{419.25} & \multicolumn{1}{r}{432.90} & \multicolumn{1}{r}{426.22} \\
        \bottomrule
    \end{tabular}
\end{table}
}

\section{Related Work}\label{s6}

Job scheduling has been extensively studied from both theoretical \cite{gautam2015survey,BSP,8486422,attiya2020job,zhang2020evolving,liang2020data,narayanan2020heterogeneity,dpe,10433234} and system-level perspectives \cite{schedule1,199394,222631,199388,borg2,decima}. Theoretical works often formulate job completion time minimization as combinatorial, constrained optimization problems, addressing them using a variety of techniques, particularly approximation algorithms. These studies provide foundational insights into optimality and computational tractability, often emphasizing complexity bounds and performance guarantees.

A critical system-level consideration in distributed scheduling is data locality, which emphasizes assigning tasks close to where the required data chunks reside to minimize communication overhead. Rooted in the MapReduce paradigm \cite{guo2012investigation}, this principle has driven numerous scheduling algorithms for both homogeneous and heterogeneous systems \cite{zhang2011improving,wang2014maptask,naik2019data,beaumont2020performance}. These approaches typically aim to balance data locality against other performance goals such as load balancing \cite{wang2014optimizing}, throughput \cite{xie2016scheduling}, delay \cite{naik2019data,fu2020optimal}, fairness \cite{10.1145/3337821.3337843}, and job completion time \cite{dong2022smart,hung2015scheduling,9488810,added1,added2,added3}.

Within this context, scheduling strategies that explicitly preserve data locality in multi-task, online job scenarios have received increased attention \cite{hung2015scheduling,8056949,10.1145/3337821.3337843,beaumont2020performance,9826037}. For instance, Hung \textit{et al.} \cite{hung2015scheduling} proposed SWAG, a workload-aware greedy algorithm that schedules jobs by estimating the completion times of all outstanding jobs. SWAG was the first to formalize job reordering in distributed environments with data locality constraints.

\begin{table*}[ht]
    \color{black}
    \centering
    \caption{Comparison of related works for distributed job executions.}
    \label{tab:related_comparison}
    \begin{tabular}{lcccccc}
    \toprule
    \textbf{Related Work} & \textbf{Online} & \textbf{Data Locality} & \textbf{Multi-task Job} & \textbf{Theoretical Analysis} & \textbf{Real Trace Evaluation} \\
    \midrule
    SWAG \cite{hung2015scheduling} & \ding{51} & \ding{55} & \ding{51} & \ding{55} & \ding{51} \\
    BTAaJ / ATA-Greedy \cite{9826037} & \ding{51} & \ding{51} & \ding{51} & \ding{55} & \ding{51} \\
    SG-PBFS / EG-SJF \cite{added1,added3} & \ding{55} & \ding{55} & \ding{55} & \ding{55} & \ding{55} \\
    MTD-DHJS \cite{added2} & \ding{55} & \ding{55} & \ding{55} & \ding{55} & \ding{55} \\
    \textbf{This Work} & \ding{51} & \ding{51} & \ding{51} & \ding{51} & \ding{51} \\
    \bottomrule
    \end{tabular}
\end{table*}

Building on this, Guan \textit{et al.} \cite{9826037} proposed the BTAaJ algorithm, which significantly improved upon SWAG by utilizing a well-constructed flow network to assign tasks for newly arrived jobs. BTAaJ groups the tasks of a job based on their available servers, achieving load balancing by minimizing a control parameter $C$. Additionally, Guan \textit{et al.} proposed the ATA-Greedy algorithm, which enhances scheduling performance by reordering outstanding jobs. This reordering process builds on the approach in \cite{hung2015scheduling}, incorporating the control parameter $C$ to refine completion time estimations. Our work extends these efforts by introducing algorithms that offer both improved performance and reduced computational overhead, making significant advancements over \cite{hung2015scheduling} and \cite{9826037}.

Other dimensions of distributed job scheduling, such as fairness and data movement, have also been explored. For instance, Guan \textit{et al.} \cite{10.1145/3337821.3337843} extended the concept of max-min fairness from single-machine settings to distributed job executions over multiple servers. Their Aggregate Max-min Fairness policy is designed to be Pareto-efficient, envy-free, and strategy-proof, providing equitable resource allocation in distributed systems. Beaumont \textit{et al.} \cite{beaumont2020performance} addressed the \textsc{Comm-Oriented} problem, which optimizes the makespan while minimizing data movement. However, their focus on single-job scenarios limits its applicability to multi-job online scheduling, which is the primary concern of our work. \textcolor{black}{Several recent works have also proposed heuristic scheduling algorithms for cloud environments. For instance, Banerjee \textit{et al.} \cite{added2} proposed MTD-DHJS, which combines Johnson sequencing and Round Robin to reduce makespan on a fixed three-server architecture. However, their model assumes batch-style job processing and lacks support for task-level data locality, complex job structures, or online arrivals. Similarly, Murad \textit{et al.} \cite{added1,added3} developed SG-PBFS and EG-SJF, which enhance traditional priority-rule (PR) scheduling with gap-based backfilling to improve flow time and delay. These methods rely on metadata such as CPU time and arrival time for decision-making, but diverge from our work in both modeling assumptions and distributed settings. In particular, they do not handle replicated input data, multi-task jobs, or theoretical performance guarantees.}

\section{Conclusion}\label{s7}

In this paper, we explored the task assignment problem in the context of online distributed job executions, with data locality preserved. Our contributions began with formulating the task assignment problem as a non-linear integer program. We developed the OBTA algorithm, known for its efficiency in reducing computational overhead by narrowing the search space of potential solutions. Additionally, we extended the WF heuristic, providing an in-depth, nontrivial tight analysis of its approximation factor, and introduced the innovative RD algorithm, which demonstrates enhanced performance. Our investigation into job reordering, particularly with the implementation of the novel early-exit technique, effectively optimized average job completion times while maintaining low computational overheads. The effectiveness of all proposed algorithms was validated through extensive trace-driven simulations. Both OBTA and the early-exit technique showed substantial reductions in computational overhead. Meanwhile, WF emerged as a practical, lower-overhead alternative for large-scale scenarios. Despite its greater computational intensity compared to WF, RD exhibited superior performance of job completion time. Looking ahead, integrating these algorithms into diverse computational settings and adapting them to the evolving dynamics of distributed job executions hold potential for further enhancing their applicability.



\appendices
\section{Proof of Theorem \ref{theorem-wf-ub}}\label{app_proof}
We need to prove that $\textsc{WF}(I) \leq K_c \cdot \textsc{OPT}(I)$ for any possible arrival instance $I$. For simplicity, we omit $I$ in the notation and use ALG to represent both the algorithm and the estimated completion time of job $c$ under that algorithm. ALG can refer to $\textsc{OPT}$, $\textsc{WF}$, $\textsc{WF}_k$, or $\textsc{OPT}_k$ (the last two notations will be introduced shortly).

Remember that when WF assigns tasks in group $ k $, we call server $ m \in \mathcal{S}_c^k $ a participating server if $ \xi_k - b_m^c(k-1) > 0 $. We use $ \Omega_c^k \subseteq \mathcal{S}_c^k $ to denote this set of participating servers. We denote by $ \textsc{WF}_k $ the number of time slots required to \textit{complete} the processing of all the tasks in $ \bigcup_{k' \leq k} \mathcal{T}_c^{k'} $.\footnote{$\textsc{WF}_k$ is the number of time slots between job $c$'s arrival and the time when the last task in $ \mathcal{T}_c^k $ completes.} Note that for each server $ m \in \bigcup_{k' \leq k} \mathcal{S}_c^{k'} $ with FIFO queues, the tasks issued by job $ c $ can be processed only after the outstanding tasks of previous jobs are processed. Thus, by definition, for each $ k \in \mathcal{K}_c $,
\begin{equation}
    \textsc{WF}_k = \max_{m \in \bigcup_{k' \leq k} \Omega_c^{k'}} b_m^c(k),
    \label{eq_042503}
\end{equation}
and for each $ k < K_c $,
\begin{equation}
    \textsc{WF}_{k+1} = \max \left\{ \textsc{WF}_k, \max_{m \in \Omega_c^{k+1}} b_m^c(k+1) \right\}.
    \label{wf_delta}
\end{equation}
Recall that $ b_m^c(k) $ is updated by \eqref{bk_update}. Fig. \ref{wf_k} visualizes $ \textsc{WF}_k $.

\begin{figure}[h]
  \centering
  \includegraphics[width=2.7in]{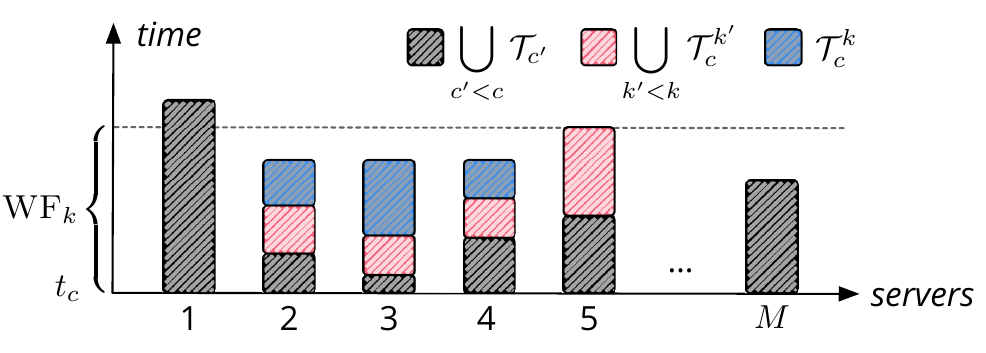}
  \caption{Visualization of $\textsc{WF}_k$. Note that $\textsc{WF}_{k} = \textsc{WF}_{k-1}$ for the constructed example.}
  \label{wf_k}
\end{figure}

Since $ \mathcal{T}_c^1 \subseteq \mathcal{T}_c $, $ \textsc{WF}_1 \leq \textsc{OPT} $. Note that $ \textsc{WF} = \textsc{WF}_{K_c} $. If we can prove that for each $ k < K_c $,
\begin{equation}
    \textsc{WF}_{k+1} - \textsc{WF}_k \leq \textsc{OPT},
    \label{delta}
\end{equation}
then we have
\begin{equation}
    \textsc{WF} = \textsc{WF}_1 + \sum_{k < K_c} (\textsc{WF}_{k+1} - \textsc{WF}_k) \leq K_c \cdot \textsc{OPT},
\end{equation}
which is exactly our target. In the following, we will show that \eqref{delta} indeed holds. To do so, we exploit the optimal assignment of the tasks of a job \textit{containing the tasks $ \mathcal{T}_c^{k+1} $ only}. Similar to $ \textsc{WF}_{k+1} $, we denote by $ \textsc{OPT}_{k+1} $ the number of time slots required to complete the processing of tasks in $ \mathcal{T}_c^{k+1} $. It is easy to see that
\begin{equation}
    \textsc{OPT}_{k+1} \leq \textsc{OPT}
    \label{opt_k}
\end{equation}
holds for each $ k \in \mathcal{K}_c $, since $ \mathcal{T}_c^{k+1} \subseteq \mathcal{T}_c $.

To prove \eqref{delta}, we consider different cases of the assignment of tasks in $\mathcal{T}_c^{k+1}$ by WF separately.

\textbf{Case I.} $\mathcal{S}_c^{k+1} \cap \bigcup_{k' \leq k} \Omega_c^{k'} = \emptyset$. In this case, the available servers for group $ k+1 $ do not overlap with the participating servers for groups $ 1, \ldots, k $. Thus, by the definition of WF, such an assignment must be optimal if a job contains only the tasks $ \mathcal{T}_c^{k+1} $. Hence, by \eqref{opt_k}, we have
\begin{equation}
    \max_{m \in \Omega_c^{k+1}} b_m^c(k+1) = \textsc{OPT}_{k+1} \leq \textsc{OPT}.
    \label{eq_042501}
\end{equation}

\begin{itemize}
    \item As visualized in Fig. \ref{new-ana-1}(i), if $ \max_{m \in \Omega_c^{k+1}} b_m^c(k+1) \leq \textsc{WF}_k $, by \eqref{wf_delta}, we have $ \textsc{WF}_{k+1} = \textsc{WF}_k $. Thus,
    \begin{equation}
        \textsc{WF}_{k+1} - \textsc{WF}_k = 0 < \textsc{OPT}.
    \end{equation}
    
    \item As visualized in Fig. \ref{new-ana-1}(ii), if $ \max_{m \in \Omega_c^{k+1}} b_m^c(k+1) > \textsc{WF}_k $, by \eqref{wf_delta} and \eqref{eq_042501}, we have
    \begin{align}
        \textsc{WF}_{k+1} - \textsc{WF}_k &< \textsc{WF}_{k+1} \nonumber\\
        &= \max_{m \in \Omega_c^{k+1}} b_m^c(k+1) \nonumber\\
        &\leq \textsc{OPT}.
    \end{align}
\end{itemize}

\begin{figure}[h]
  \centering
  \includegraphics[width=2.88in]{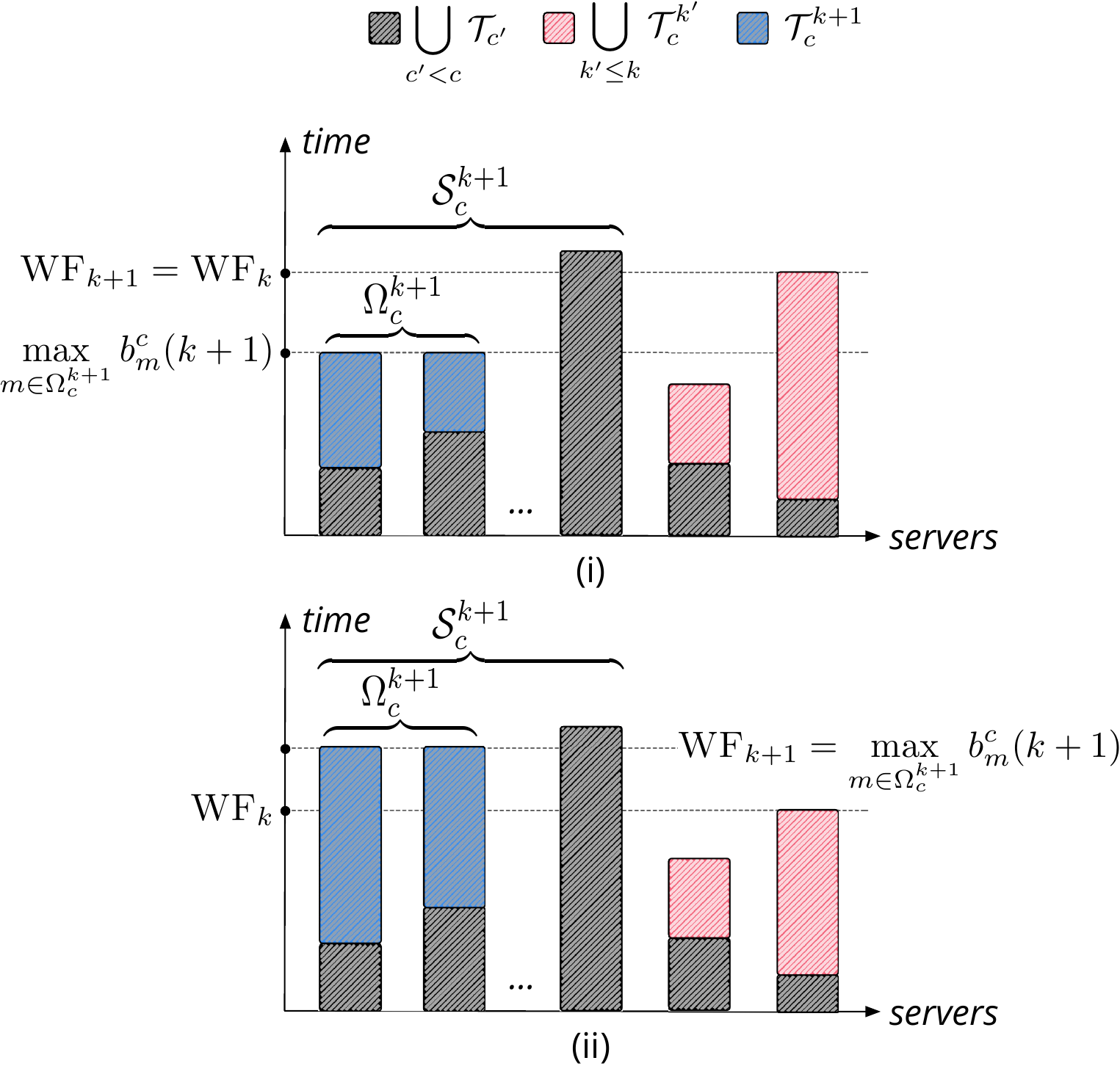}
  \caption{Visualization of Case I. In (i), $\max_{m \in \Omega_c^{k+1}} b_m^c(k+1) \leq \textsc{WF}_k$. In (ii), $\max_{m \in \Omega_c^{k+1}} b_m^c(k+1) > \textsc{WF}_k$.}
  \label{new-ana-1}
\end{figure}

\textbf{Case II.} $\mathcal{S}_c^{k+1} \cap \bigcup_{k' \leq k} \Omega_c^{k'} \neq \emptyset$, i.e., the available servers of group $ k+1 $ overlap with the participating servers of lower-indexed groups.

If $ \max_{m \in \Omega_c^{k+1}} b_m^c(k+1) \leq \textsc{WF}_k $, we again have $ \textsc{WF}_{k+1} = \textsc{WF}_k $ by \eqref{wf_delta} and thus, $ \textsc{WF}_{k+1} - \textsc{WF}_k = 0 < \textsc{OPT} $.

If $ \max_{m \in \Omega_c^{k+1}} b_m^c(k+1) > \textsc{WF}_k $, we have $ \textsc{WF}_{k+1} > \textsc{WF}_k $. In this case, by definition, each participating server of groups $ 1, \ldots, k $ must also be a participating server of group $ k+1 $ by WF (because the estimated busy times of all these servers are smaller than $ \max_{m \in \Omega_c^{k+1}} b_m^c(k+1) $).

For each $ m \in \mathcal{S}_c^{k+1} \backslash \Omega_c^{k+1} $ (i.e., server $ m $ is available to group $ k+1 $ but does not participate in $ \mathcal{T}_c^{k+1} $'s assignment), it can be shown that:
\begin{enumerate}
    \item $ b_m^c(k) \geq \max_{m' \in \Omega_c^{k+1}} b_{m'}^{c}(k+1) $, and
    \item by WF, server $ m $ does not participate in the assignment of groups $ 1, \ldots, k $.
\end{enumerate}
Property 1) follows directly from the definition of WF (otherwise, server $ m $ must be a participating server of group $ k+1 $). Property 2) follows from the above observation that each participating server of groups $ 1, \ldots, k $ must also be a participating server of group $ k+1 $ by WF. By properties 1) and 2), for each server $ m \in \mathcal{S}_c^{k+1} \backslash \Omega_c^{k+1} $, the initial estimated busy time $b_m^c(0)$ satisfies
\begin{equation}
    b_m^c(0) = b_m^c(k) \geq \max_{m' \in \Omega_c^{k+1}} b_{m'}^c(k+1).
    \label{eq_042502}
\end{equation}

With this result, we prove $ \textsc{WF}_{k+1} - \textsc{WF}_k \leq \textsc{OPT} $ by comparing $ \Omega_c^{k+1} $ and $ \beth_c^{k+1} $, where $ \beth_c^{k+1} \subseteq \mathcal{S}_c^{k+1} $ is the set of participating servers for the assignment of $\mathcal{T}_c^{k+1}$ by $ \textsc{OPT}_{k+1} $. 

\textbf{Case II-A.} $ \beth_c^{k+1} \cap (\mathcal{S}_c^{k+1} \backslash \Omega_c^{k+1}) \neq \emptyset $, i.e., at least one server in $ \mathcal{S}_c^{k+1} \backslash \Omega_c^{k+1} $ is chosen for assignment of $ \mathcal{T}_c^{k+1} $ by $ \textsc{OPT}_{k+1} $. Taking any server $ m \in \beth_c^{k+1} \cap (\mathcal{S}_c^{k+1} \backslash \Omega_c^{k+1}) $, we have
\begin{equation}
    \textsc{WF}_{k+1} = \max_{m' \in \Omega_c^{k+1}} b_{m'}^c (k+1) \overset{\eqref{eq_042502}}{\leq} b_m^c(0) < \textsc{OPT}_{k+1} \overset{\eqref{opt_k}}{\leq} \textsc{OPT}.
\end{equation}
As a result, $ \textsc{WF}_{k+1} - \textsc{WF}_k < \textsc{WF}_{k+1} < \textsc{OPT} $.

\textbf{Case II-B.} $ \beth_c^{k+1} \cap (\mathcal{S}_c^{k+1} \backslash \Omega_c^{k+1}) = \emptyset $. It follows that $ \beth_c^{k+1} \subseteq \Omega_c^{k+1} $, which implies
\begin{equation}
    \beth_c^{k+1} = \Omega_c^{k+1} \cap \beth_c^{k+1}.
    \label{eq_042508}
\end{equation}
We divide the servers in $ \Omega_c^{k+1} $ into two disjoint sets:
\begin{align}
    \Omega_{c, \leq}^{k+1} := \left\{ m \in \Omega_c^{k+1} \mid b_m^c(k) \leq \textsc{WF}_k \right\}, \label{eq_042504} \\
    \Omega_{c, >}^{k+1} := \left\{ m \in \Omega_c^{k+1} \mid b_m^c(k) > \textsc{WF}_k \right\}. \label{eq_042505}
\end{align}
For each $ m \in \Omega_{c, >}^{k+1} $, if server $ m $ is a participating server in the assignment of any group $ 1, \ldots, k $ by WF, we have
\begin{equation}
    \textsc{WF}_k \overset{\eqref{eq_042503}}{\geq} b_m^c(k) \overset{\eqref{eq_042505}}{>} \textsc{WF}_k,
\end{equation}
giving a contradiction. Thus, server $ m $ is not a participating server in the assignment of groups $ 1, \ldots, k $ by WF. As a result, $ b_m^c(k) = b_m^c(0) $ holds for each $ m \in \Omega_{c, >}^{k+1} $.

\textbf{Case II-B1.} $ \beth_c^{k+1} \cap \Omega_{c, >}^{k+1} = \emptyset $. In this case, we have $ \beth_c^{k+1} \subseteq \Omega_{c, \leq}^{k+1} $. In $ \textsc{OPT}_{k+1} $, $ \textsc{OPT}_{k+1} $ time slots of all the participating servers $ \beth_c^{k+1} $ can accommodate all the tasks of $ \mathcal{T}_c^{k+1} $. Thus,
\begin{equation}
    |\mathcal{T}_c^{k+1}| \leq \textsc{OPT}_{k+1} \cdot \sum_{m \in \beth_c^{k+1}} \mu_m^c.
    \label{eq_042506}
\end{equation}
On the other hand, by the definition of WF, $ (\textsc{WF}_{k+1} - \textsc{WF}_k - 1) $ time slots of all the participating servers $ \Omega_{c, \leq}^{k+1} $ are not adequate for all the tasks of $ \mathcal{T}_c^{k+1} $. Hence,
\begin{equation}
    \left( \textsc{WF}_{k+1} - \textsc{WF}_k - 1 \right) \cdot \sum_{m \in \Omega_{c, \leq}^{k+1}} \mu_m^c < |\mathcal{T}_c^{k+1}|.
    \label{eq_042507}
\end{equation}
Combining \eqref{eq_042506} and \eqref{eq_042507} and noting that $ \beth_c^{k+1} \subseteq \Omega_{c, \leq}^{k+1} $, we obtain $ \textsc{WF}_{k+1} - \textsc{WF}_k - 1 < \textsc{OPT}_{k+1} $. By \eqref{opt_k}, we have $ \textsc{WF}_{k+1} - \textsc{WF}_k \leq \textsc{OPT} $.

\textbf{Case II-B2.} $ \beth_c^{k+1} \cap \Omega_{c, >}^{k+1} \neq \emptyset $. In this case, at least one server in $ \Omega_{c, >}^{k+1} $ is a participating server for the assignment of $ \mathcal{T}_c^{k+1} $ by $ \textsc{OPT}_{k+1} $. Hence, all the servers in $ \Omega_{c, \leq}^{k+1} $ must also be participating servers for the assignment of $ \mathcal{T}_c^{k+1} $ by $ \textsc{OPT}_{k+1} $. In other words, $ \Omega_{c, \leq}^{k+1} \subseteq \beth_c^{k+1} $.

\begin{figure}[h]
  \centering
  \includegraphics[width=3.1in]{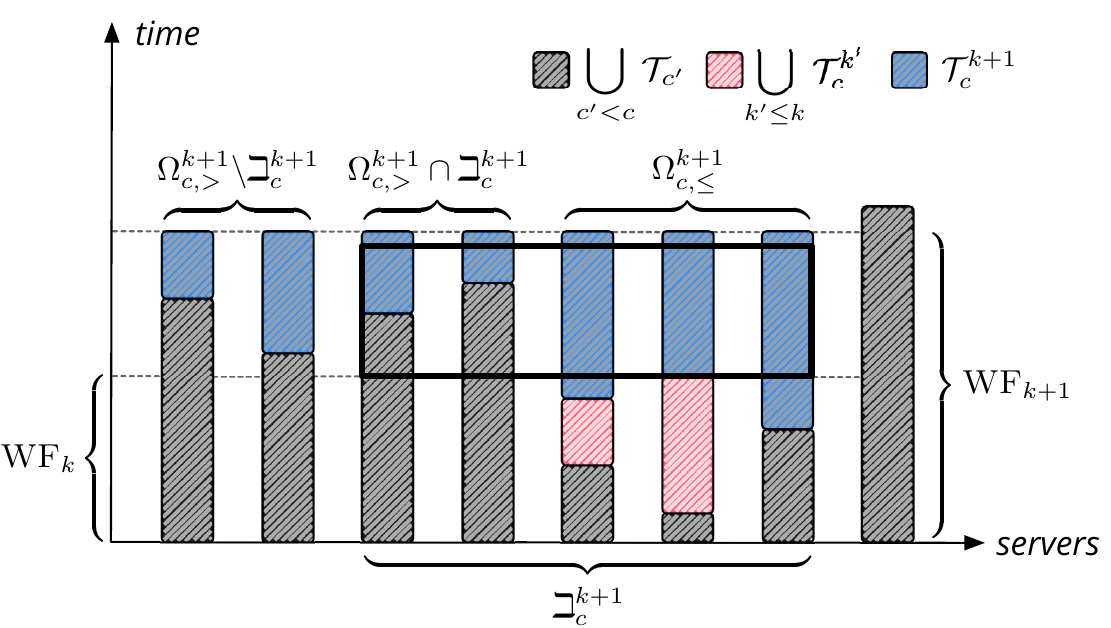}
  \caption{Visualization of Case II-B2.}
  \label{blackbox}
\end{figure}

In $ \textsc{OPT}_{k+1} $, $ \textsc{OPT}_{k+1} $ time slots of all the participating servers $ \beth_c^{k+1} $ can accommodate all the tasks of $ \mathcal{T}_c^{k+1} $ and the outstanding tasks of previous jobs queued at the servers in $ \beth_c^{k+1} \cap \Omega_{c, >}^{k+1} $. Thus,
\begin{equation}
    |\mathcal{T}_c^{k+1}| + \sum_{m \in \Omega_{c, >}^{k+1} \cap \beth_c^{k+1}} b_m^c(0) \cdot \mu_m^c \leq \textsc{OPT}_{k+1} \cdot \sum_{m \in \beth_c^{k+1}} \mu_m^c.
    \label{eq_042510}
\end{equation}

On the other hand, by the definition of WF, $ (\textsc{WF}_{k+1} - \textsc{WF}_k - 1) $ time slots of all the participating servers in $ \Omega_c^{k+1} \cap \beth_c^{k+1} $ are not adequate for all the tasks of $ \mathcal{T}_c^{k+1} $ and the backlogs beyond $ \textsc{WF}_k $ (see the black box in Fig. \ref{blackbox} for an illustration). Hence,
\begin{align}
    &(\textsc{WF}_{k+1} - \textsc{WF}_k - 1) \cdot \sum_{m \in \beth_c^{k+1}} \mu_m^c \nonumber\\
    &\qquad\overset{\eqref{eq_042508}}{=} (\textsc{WF}_{k+1} - \textsc{WF}_k - 1) \cdot \sum_{m \in \Omega_c^{k+1} \cap \beth_c^{k+1}} \mu_m^c \nonumber\\ 
    &\qquad< |\mathcal{T}_c^{k+1}| + \sum_{m \in \Omega_{c, >}^{k+1} \cap \beth_c^{k+1}} (b_m^c(0) - \textsc{WF}_k) \cdot \mu_m^c.
    \label{eq_042509}
\end{align}

Note that the right side of \eqref{eq_042509} is no greater than the left side of \eqref{eq_042510}. Combining \eqref{eq_042510} with \eqref{eq_042509}, we obtain $ \textsc{WF}_{k+1} - \textsc{WF}_k - 1 < \textsc{OPT}_{k+1} $. By \eqref{opt_k}, we have $ \textsc{WF}_{k+1} - \textsc{WF}_k \leq \textsc{OPT} $.

With the above analysis for \textbf{Case I} and \textbf{Case II}, we have shown that \eqref{delta} indeed holds.

\bibliographystyle{IEEEtran}
\bibliography{ref}

\end{document}